\newcommand{\norm}[1]{\left\lVert#1\right\rVert}
\newcommand{\abs}[1]{\left|#1\right|}
\newcommand{\ubar}[1]{\text{\b{$#1$}}}
\newcommand{\sgn}[1]{{\rm sgn}({#1})}
\newcommand{\numberthis}{\addtocounter{equation}{1}\tag{\theequation}}
\newtheorem{asmp}{Assumption}
\newtheorem{thm}{Theorem}
\newtheorem{lem}{Lemma}
\begin{document}

\begin{frontmatter}

\title{Adaptive Control with Rate-Limited Integral Action \\
for Systems with Matched, Time-Varying Uncertainties}

\author{Ying-Chun Chen$^a$, Craig Woolsey$^a$} %% Author name

%% Author affiliation
\affiliation{organization={Kevin T. Crofton Department of Aerospace \& Ocean Engineering},%Department and Organization
            addressline={Virginia Tech},
            city={Blacksburg},
            postcode={24060},
            state={Virginia},
            country={USA}}

%% Abstract
\begin{abstract}
%% Text of abstract
This paper considers the problem of controlling a piecewise continuously differentiable system subject to time-varying uncertainties. The uncertainties are decomposed into a time-invariant, linearly-parameterized portion and a time-varying unstructured portion. The former is addressed using conventional model reference adaptive control. The latter is handled using disturbance observer-based control. The objective is to ensure good performance through observer-based disturbance rejection when possible, while preserving the robustness guarantees of adaptive control. A key feature of the observer-based disturbance compensation is a magnitude and rate limit on the integral action that prevents fast fluctuations in the control command due to the observer dynamics.
\end{abstract}

%% Keywords
\begin{keyword}
Adaptive control \sep disturbance observer  \sep disturbance rejection
\end{keyword}

\end{frontmatter}

%%%%%%%%%%%%%%%%%%%%%%%%%%%%%%%%%%%%%%%%%%%%%%%%%%%%%%%%%%%%%%%%%%%%%%
\section{Introduction}
Adaptive control \cite{Krstic.AdaptiveControl,Hovakimyan.L1AdaptiveControl,Hovakimyan&Cao.L1AdaptiveControl,Narendra&Annaswamy.StableAdaptiveSystems,Lavretsky&Wise.RobustAdaptiveControl} is a robust control method that can effectively compensate for structured uncertainties with constant unknown parameters. It can also preserve closed-loop stability in the presence of time-varying disturbances as long as the disturbances are bounded. While stability is guaranteed in many such cases, however, control performance can degrade because of those time-varying disturbance terms. Variants of adaptive control have been proposed to deal with different types of unknown time-varying terms, as in \cite{Pagilla.ASMEDynSys2004,Chen.TAC2021,Marino.Automatica2003}, but the aim is to tolerate the disturbances rather than to recover the performance of a control law designed for a nominal, undisturbed system.

An intuitive approach for recovering the desired control performance is to directly compensate for disturbances using disturbance observer-based control (DOBC) \cite{Li&Chen.DOBC}, a control scheme that cancels disturbances using disturbance estimates. An advantage of such a strategy is that the disturbance estimator is designed independently of the controller and appended to that controller afterwards, which may be more expedient than simultaneously constructing an estimator and a controller. A typical DOBC design assumes the system to be controlled is totally known but subject to an external disturbance that is independent of the system state, estimating that external disturbance and counteracting it by the estimates; for examples, see \cite{Li&Chen.DOBC} and \cite{Back2008.DOBC,Khalil&Praly.HGOBC}. Advanced DOBC design further allows parametric uncertainties to appear in the system dynamics,resulting in the methods that consist of both the active rejection of external disturbances and suppression of system uncertainties; see \cite{Chen.DOSMC,Tao.DOSMC,Ginoya.DOSMC,Chen.JDSMC20,Chen.AJC20} for example. This paper adopts the same concept, extending conventional model reference adaptive control (MRAC) \cite{Krstic.AdaptiveControl,Hovakimyan.L1AdaptiveControl} to incorporate DOBC, a concept we will call \emph{disturbance observer-based adaptive control} (DOBAC). Rather than specify a particular disturbance observer structure, the paper describes a class of disturbance observers that may be incorporated into a DOBAC scheme.

Many disturbance observers assume the disturbance dynamics are benign and can therefore be ignored \cite{Li&Chen.DOBC}, though if the structure for the disturbance dynamics is known, the requirement for slow variation may be relaxed \cite{Chen2004.DOBC,Kim2010.DO}. Without knowledge of the dynamic structure, one may need to adopt a high observer gain \cite{Khalil.HGO,Chen&Woolsey.TCST23} to ensure fast-varying disturbances can be properly estimated, if any are present. The need to consider fast disturbance dynamics is not just academic. In many cases, the disturbance to be estimated is a ``lumped disturbance" that depends on not only external disturbances but also on the system state and input, increasing the possibility of fast variations due to the control system's dynamics. Classical observer-based control, for example, features fast-varying control signals due to observer dynamics \cite{ChenAndWoolsey.ACC22}. As a type of observer-based control, DOBC can suffer from poor disturbance estimation because of fast disturbance observer dynamics. To address that challenge, this paper proposes magnitude- and rate-limited integral action in the disturbance estimation scheme.

Section~\ref{sec:ControlDesign} describes the system structure and the control framework, which is the conventional MRAC structure augmented with a disturbance rejection term. The disturbance observer is introduced in Section~\ref{sec:DisturbanceRejection}, followed by a discussion in Section~\ref{sec:DifficultyInDisturbanceEstimation} about a challenge that can arise in estimating the disturbance. In Section~\ref{sec:IntuitiveDisturbanceRejection}, based on the concept of DOBC, we propose the design of the disturbance rejection term. In Section~\ref{sec:PerformanceAnalysis}, we analyze the overall performance of the resulting DOBAC. The idea is demonstrated in Section~\ref{sec:Example} using a mass-spring-damper system. Section~\ref{sec:KEtaRemarks} provides some remarks about a critical design parameter. Conclusions are presented in Section~\ref{sec:Conclusion}.

%%%%%%%%%%%%%%%%%%%%%%%%%%%%%%%%%%%%%%%%%%%%
\section{System description and the control framework} \label{sec:ControlDesign}
%%%%%%%%%%%%%%%%%%%%%%%%%%%%%%%%%%%%%%%%%%%%
Consider the following system whose state $\bm{x}(t) \in \mathbb{R}^n$ evolves under the influence of a control signal $u(t) \in \mathbb{R}$ and subject to an unknown, bounded disturbance $d(t) \in \mathbb{R}$:
\begin{equation} \label{eq:NonlinearSystem_Disturbance}
\begin{split}
    \dot{\bm{x}} &= \bm{A}\bm{x} + \bm{b}\Lambda(\bm{V}^T\bm{\phi}_V(\bm{x})+\bm{W}^T\bm{\phi}_W(\bm{x})+u+d)
\end{split}
\end{equation}
The constant terms $\bm{A}\in\mathbb{R}^{n \times n}$, $\bm{V}\in\mathbb{R}^{m_V}$, and $\bm{W}\in\mathbb{R}^{m_W}$ are unknown, while the constant vector $\bm{b}\in\mathbb{R}^n$ is known. The term $\Lambda\in\mathbb{R}$ is also an unknown constant, but $\sgn{\Lambda}$ is known. The known nonlinearities $\bm{\phi}_V:\mathbb{R}^{n}\to\mathbb{R}^{m_V}$ and $\bm{\phi}_W:\mathbb{R}^{n}\to\mathbb{R}^{m_W}$ are piecewise continuously differentiable.

Consider the following
\begin{align} \label{eq:ReferenceModel}
    \text{\emph{Reference model:}} \quad
    \begin{cases}
        \dot{\bm{x}}_{\rm r} = \bm{A}_{\rm r}\bm{x}_{\rm r} + \bm{b}{\Lambda_{\rm r}}r \\
        \bm{e} := \bm{x}-\bm{x}_{\rm r}
    \end{cases}
\end{align}
The Hurwitz matrix $\bm{A}_{\rm r}\in\mathbb{R}^{n \times n}$ and the input scaling $\Lambda_{\rm r} \in \mathbb{R}$ are chosen by the control designer, along with the piecewise continuously differentiable reference input $r(t) \in \mathbb{R}$, to generate a desired trajectory $\bm{x}_{\rm r}(t)$ for the system~\eqref{eq:NonlinearSystem_Disturbance} to track. The control objective is to have
\begin{align} \label{f:trackingError_Performance}
    \norm{\bm{e}} = \norm{\bm{x}-\bm{x}_{\rm r}} \leq \epsilon_{\rm r} \quad\forall\; t \geq T_{\rm r}
\end{align}
where the tracking error tolerance $\epsilon_{\rm r}$ and the convergence time $T_{\rm r}$ reflect the desired reference tracking performance.

The system~\eqref{eq:NonlinearSystem_Disturbance} may be rewritten as
\begin{equation} \label{eq:NonlinearSystem_Disturbance_Rewritten}
\begin{split}
    \dot{\bm{x}} &= \bm{A}_{\rm r}\bm{x} + \bm{b}\Lambda_{\rm r}(\bm{V}_{\rm r}^T\bm{\phi}_V(\bm{x})+{u})+\bm{d}_{u}
\end{split}
\end{equation}
where we have introduced the \emph{lumped disturbance}:
\begin{align*}
    \bm{d}_u &= (\bm{A}-\bm{A}_{\rm r})\bm{x}
        +\bm{b}
            \Big(
                (\Lambda\bm{V}-\Lambda_{\rm r}\bm{V}_{\rm r})^T\bm{\phi}_V(\bm{x}) \\
                &\quad
                +(\Lambda-\Lambda_{\rm r}){u}+\Lambda\bm{W}^T\bm{\phi}_W(\bm{x})+\Lambda{d}
            \Big) \numberthis\label{f:du}
\end{align*}

We assume the standard \emph{matching conditions} of MRAC that relate the unknown linear components in~(\ref{eq:NonlinearSystem_Disturbance}) to the reference model~(\ref{eq:ReferenceModel}):
\begin{asmp}[Matching conditions~\cite{Krstic.AdaptiveControl}] \label{asmp:MatchingCondition}
There exists a constant vector $\bm{k}^*$ and a constant scalar $k_{\rm r}^*$ such that
\begin{equation} \label{eq:MatchingCondition_LinearSys}
\begin{split}
    \bm{A}_{\rm r} &= \bm{A}+\bm{b}\Lambda{\bm{k}_x^*}^T \\
    \bm{b}\Lambda_{\rm r} &= \bm{b}{\Lambda}k_{\rm r}^*
\end{split}
\end{equation}
\end{asmp}

The proposed control framework is a variant of conventional MRAC that incorporates a \emph{disturbance-rejection input} $u_{\rm drj}$ yet to be designed:
\begin{align*}
    u &= [\hat{\bm{k}}_x^T,{\hat{k}_{\rm r}}][\bm{x}^T,r]^T
    -\hat{\bm{V}}^T\bm{\phi}_V(\bm{x})
    -\hat{\bm{W}}^T\bm{\phi}_W(\bm{x})
    +u_{\rm drj} \numberthis\label{eq:DesiredAdpativeControl_LinearSystem_Dist}
\end{align*}
where $u_{\rm drj}(t) \in \mathbb{R}$ is some bounded, piecewise continuously differentiable signal. If $u_{\rm drj}(t) = 0$, then this control law is the standard MRAC law. Define the \emph{adaptive parameters}
$\hat{\bm{k}}_x$, $\hat{k}_{\rm r}$, $\hat{\bm{V}}$, and $\hat{\bm{W}}$, which are governed by the following \emph{adaptation rules}:
\begin{equation} \label{eq:DistObserver_Summary_AdaptiveLaw}
\begin{split}
    &\quad
    \begin{cases}
        \begin{array}{rclrcl}
        \dot{\hat{\bm{k}}}_x &\!\!=\!\!& \bm{\Gamma}_x{\rm proj}(\hat{\bm{k}}_x,\bm{v}_x,f_x(\hat{\bm{k}}_x))
         \\
        \dot{\hat{k}}_{\rm r} &\!\!=\!\!& \gamma_{\rm r}{\rm proj}(\hat{k}_{\rm r},v_{\rm r},f_{\rm r}(\hat{k}_{\rm r})) \\
        \dot{\hat{\bm{V}}} &\!\!=\!\!& \bm{\Gamma}_{\rm V}{\rm proj}(\hat{\bm{V}},\bm{v}_{\rm V},f_{\rm V}(\hat{\bm{V}})) \\
        \dot{\hat{\bm{W}}} &\!\!=\!\!& \bm{\Gamma}_{\rm W}{\rm proj}(\hat{\bm{W}},\bm{v}_{\rm W},f_{\rm W}(\hat{\bm{W}}))
        \end{array}
    \end{cases}
\end{split}
\end{equation}
where
\begin{equation} \label{f:v}
\begin{split}
    &\bm{v}_x = -\bm{x}\bm{e}^T\bm{P}\bm{b}~\sgn{\Lambda},
    \quad
        v_{\rm r} = -r\bm{e}^T\bm{P}\bm{b}~\sgn{\Lambda} \\
    &\bm{v}_{\rm V} = \bm{\phi}_V(\bm{x})\bm{e}^T\bm{P}\bm{b}~\sgn{\Lambda},
    \quad
        \bm{v}_{\rm W} = \bm{\phi}_W(\bm{x})\bm{e}^T\bm{P}\bm{b}~\sgn{\Lambda}
\end{split}
\end{equation}
The matrices $\bm{\Gamma}_x$, $\gamma_{\rm r}$, $\bm{\Gamma}_{\rm V}$, $\bm{\Gamma}_{\rm W}$, and $\bm{P}$ in~\eqref{eq:DistObserver_Summary_AdaptiveLaw}--\eqref{f:v} are design parameters that are chosen to be positive-definite. The function `${\rm proj}(\cdot)$' is the projection operator that appears in the projection algorithm in~\cite{Pomet.ProjectionAlgorithm}, wherein convex functions  $f_x(\cdot)$, $f_{\rm r}(\cdot)$, $f_{\rm V}(\cdot)$, and $f_{\rm W}(\cdot)$ are introduced to confine the values of the adaptive parameters. Let $\tilde{\bm{k}}_x:=\hat{\bm{k}}_x-\bm{k}_x^*$, $\tilde{k}_{\rm r}:=\hat{k}_{\rm r}-k_{\rm r}^*$, $\tilde{\bm{V}}:=\hat{\bm{V}}-\bm{V}$, and $\tilde{\bm{W}}:=\hat{\bm{W}}-\bm{W}$. The projection algorithm ensures the existence of a time $T_{\rm p}$ such that
\begin{equation}
\|\tilde{\bm{k}}_x\| \leq b_{k_x}, \;
\|\tilde{\bm{k}}_{\rm r}\| \leq b_{k_{\rm r}}, \;
\|\tilde{\bm{V}}\| \leq b_{V}, \;
\|\tilde{\bm{W}}\| \leq b_{W}
\quad \forall t \geq T_p
\label{f:para_Performance}
\end{equation}
where $b_{k_x}$, $b_{k_{\rm r}}$, $b_{V}$, and $b_{W}$ are positive constants.

\begin{lem} \label{lem:Bounded_x}
Under the control law~\eqref{eq:DesiredAdpativeControl_LinearSystem_Dist}, the system state $\bm{x}$ remains bounded.
\end{lem}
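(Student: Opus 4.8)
The plan is to run a composite Lyapunov argument of the standard MRAC type, with the crucial observation that the defining inequality of the projection operator makes every state-dependent cross term cancel exactly, so the (possibly unbounded) nonlinearities $\bm{\phi}_V$ and $\bm{\phi}_W$ never need a growth bound. First I would substitute the control law~\eqref{eq:DesiredAdpativeControl_LinearSystem_Dist} into the system~\eqref{eq:NonlinearSystem_Disturbance}, write $\hat{\bm{k}}_x=\bm{k}_x^*+\tilde{\bm{k}}_x$, $\hat{k}_{\rm r}=k_{\rm r}^*+\tilde{k}_{\rm r}$, $\hat{\bm{V}}=\bm{V}+\tilde{\bm{V}}$, $\hat{\bm{W}}=\bm{W}+\tilde{\bm{W}}$, and invoke the matching conditions~\eqref{eq:MatchingCondition_LinearSys} to collapse the known part into $\bm{A}_{\rm r}\bm{x}+\bm{b}\Lambda_{\rm r}r$. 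Subtracting the reference model~\eqref{eq:ReferenceModel} yields the tracking-error dynamics
\[
\dot{\bm{e}} = \bm{A}_{\rm r}\bm{e} + \bm{b}\Lambda(\tilde{\bm{k}}_x^T\bm{x} + \tilde{k}_{\rm r}r - \tilde{\bm{V}}^T\bm{\phi}_V - \tilde{\bm{W}}^T\bm{\phi}_W + u_{\rm drj} + d).
\]

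Next I would take $\bm{P}$ to satisfy $\bm{A}_{\rm r}^T\bm{P}+\bm{P}\bm{A}_{\rm r}=-\bm{Q}$ for some $\bm{Q}\succ 0$ (possible since $\bm{A}_{\rm r}$ is Hurwitz) and introduce the composite function
\[
V = \bm{e}^T\bm{P}\bm{e} + \abs{\Lambda}(\tilde{\bm{k}}_x^T\bm{\Gamma}_x^{-1}\tilde{\bm{k}}_x + \tilde{k}_{\rm r}^2/\gamma_{\rm r} + \tilde{\bm{V}}^T\bm{\Gamma}_{\rm V}^{-1}\tilde{\bm{V}} + \tilde{\bm{W}}^T\bm{\Gamma}_{\rm W}^{-1}\tilde{\bm{W}}).
\]
Differentiating, the quadratic part contributes $-\bm{e}^T\bm{Q}\bm{e}$ plus the cross terms $2\Lambda(\bm{e}^T\bm{P}\bm{b})(\tilde{\bm{k}}_x^T\bm{x}+\tilde{k}_{\rm r}r-\tilde{\bm{V}}^T\bm{\phi}_V-\tilde{\bm{W}}^T\bm{\phi}_W)$, together with the forcing $2\Lambda(\bm{e}^T\bm{P}\bm{b})(u_{\rm drj}+d)$. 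The key step is to bound the parameter block via the projection inequality $\tilde{\bm{\theta}}^T{\rm proj}(\hat{\bm{\theta}},\bm{v},f)\leq\tilde{\bm{\theta}}^T\bm{v}$ (valid because the true parameters lie in the respective convex sets), substitute the specific $\bm{v}_x,v_{\rm r},\bm{v}_{\rm V},\bm{v}_{\rm W}$ from~\eqref{f:v}, and use $\abs{\Lambda}\sgn{\Lambda}=\Lambda$; this produces precisely the negatives of the four state-dependent cross terms, leaving
\[
\dot{V} \leq -\bm{e}^T\bm{Q}\bm{e} + 2\Lambda(\bm{e}^T\bm{P}\bm{b})(u_{\rm drj} + d).
\]

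Finally, since $u_{\rm drj}$ is bounded by hypothesis and $d$ is bounded by assumption, the Cauchy--Schwarz inequality gives $\dot{V}\leq-\lambda_{\min}(\bm{Q})\norm{\bm{e}}^2+c\,\norm{\bm{e}}$ with the finite constant $c=2\abs{\Lambda}\,\norm{\bm{P}\bm{b}}(\sup_t\abs{u_{\rm drj}}+\sup_t\abs{d})$, which is strictly negative whenever $\norm{\bm{e}}>c/\lambda_{\min}(\bm{Q})$. The projection algorithm keeps the parameter-error block bounded (cf.~\eqref{f:para_Performance}), so $V$ is bounded below by $\lambda_{\min}(\bm{P})\norm{\bm{e}}^2$ and above by a fixed constant on the complement of a ball in $\bm{e}$; a standard invariant-sublevel-set argument then confines the trajectory and shows $\bm{e}$ is bounded. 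Because $\bm{A}_{\rm r}$ is Hurwitz and $r$ is bounded, the reference state $\bm{x}_{\rm r}$ is bounded, whence $\bm{x}=\bm{e}+\bm{x}_{\rm r}$ is bounded.

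The step I expect to be the main obstacle is the cross-term cancellation: one must verify that the projection-based bound on the adaptation dynamics reproduces each state-dependent cross term arising from $\bm{e}^T\bm{P}\bm{b}$ with matching sign and magnitude, so that they annihilate without ever requiring a bound on $\bm{\phi}_V$ or $\bm{\phi}_W$. This is exactly what permits the result for merely piecewise continuously differentiable (and possibly unbounded) nonlinearities, and the careful bookkeeping of the $\sgn{\Lambda}$ factors against the $\abs{\Lambda}$ weighting in $V$ is where the argument must be handled with precision.
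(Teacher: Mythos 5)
Your proposal is correct and follows essentially the same route as the paper: the identical composite Lyapunov function~\eqref{f:LyapunovFunction_Linear}, projection-based cancellation of the state-dependent cross terms (which the paper compresses into ``following the standard derivation of MRAC with the projection operator''), the bound $\dot{V}\leq-\ubar{\lambda}_{Q}\norm{\bm{e}}^2+c\norm{\bm{e}}$ from boundedness of $u_{\rm drj}$ and $d$, and then boundedness of $\bm{x}=\bm{e}+\bm{x}_{\rm r}$. Your version is in fact more explicit than the paper's on the one point that matters---that the projection's a priori confinement of the parameter errors is what turns ``$\dot{V}\leq 0$ outside a ball in $\bm{e}$'' into boundedness of $\bm{e}$, avoiding parameter drift---so no gaps.
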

\begin{proof}
Following the standard derivation of MRAC with the projection operator, the time derivative of the Lyapunov function
\begin{align*}
&V(\bm{e},\tilde{\bm{k}}_x,\tilde{k}_r,\tilde{\bm{V}},\tilde{\bm{W}})
    = \bm{e}^T\bm{P}\bm{e} \\
    &\quad
        +|\Lambda|
         \Big(
            \tilde{\bm{k}}_x^T\bm{\Gamma}_x^{-1}\tilde{\bm{k}}_x+\gamma_r^{-1}\tilde{k}_r^2 +\tilde{\bm{V}}^T\bm{\Gamma}_V^{-1}\tilde{\bm{V}}+\tilde{\bm{W}}^T\bm{\Gamma}_W^{-1}\tilde{\bm{W}}
            \Big) \numberthis\label{f:LyapunovFunction_Linear}
\end{align*}
is
\begin{align*}
    \dot{V}
        &=
            -\bm{e}^T\bm{Q}\bm{e}
            +
            2\bm{e}^T\bm{P}\bm{b}\Lambda(u_{\rm drj}+d) \\
        &\leq
            -\ubar{\lambda}_{Q}\norm{\bm{e}}^2
            +2\norm{\Lambda\bm{P}\bm{b}(u_{\rm drj}+d)}\norm{\bm{e}} \\
        &\leq
            0 \quad\forall\;\;\norm{\bm{e}}\geq\frac{2\norm{\Lambda\bm{P}\bm{b}(u_{\rm drj}+d)}}{\ubar{\lambda}_{Q}} \numberthis \label{f:DesiredPerformance}
\end{align*}
where $\bm{Q}$ is a positive definite matrix satisfying
$\bm{A}_{\rm r}^T\bm{P}+\bm{P}\bm{A}_{\rm r} = -\bm{Q}$ and $\ubar{\lambda}_{Q}$ is the minimum eigenvalue of $\bm{Q}$.
Because $u_{\rm drj}$ and $d$ are bounded, $\bm{e}$ is bounded. Since $\bm{x}_{\rm r}$ is bounded, as well, it follows that $\bm{x}$ is bounded.
\end{proof}

%%%%%%%%%%%%%%%%%%%%%%%%%%%%%%%%%%%%%%%%%%%%
\section{The disturbance observer}
\label{sec:DisturbanceRejection}
%%%%%%%%%%%%%%%%%%%%%%%%%%%%%%%%%%%%%%%%%%%%
It can be seen from~\eqref{f:DesiredPerformance} that if $u_{\rm drj}+d \approx 0$, it is possible the tracking error $\bm{e}$ will converge to a small value. Such a situation can be achieved by the concept of DOBC: utilizing the \emph{disturbance estimate}, denoted by $\hat{d}$, to counteract the disturbance $d$ by letting $u_{\rm drj}=-\hat{d}$. The proposed approach involves first evaluating a \emph{lumped disturbance estimate}, denoted by $\hat{\bm{d}}_u$, and then deducing $\hat{d}$ from it.

\begin{asmp} \label{asmp:LumpedDisturbanceObserver}
One has designed a \emph{lumped disturbance observer}:
\begin{align} \label{eq:LumpedDisturbanceObserver}
    \dot{\hat{\bm{d}}}_u = \bm{f}_{\rm DO}(\hat{\bm{d}}_u,\bm{y}), \quad\bm{y} = \bm{h}(\bm{x})
\end{align}
which satisfies
\begin{align*}
    \norm{\hat{\bm{d}}_u(t)-\bm{d}_u(t)}
    \leq
    \max{\left\{\beta\left(\norm{\hat{\bm{d}}_u(t_0)-\bm{d}_u(t_0)},t-t_0\right),\epsilon_{d_u}\right\}} \numberthis\label{eq:LumpedDisturbanceObserver_Performance}
\end{align*}
where $\beta$ is a class $\mathcal{KL}$ function \cite{Khalil.NonlinearSys} and $\epsilon_{d_u}$ is some small positive constant. The signal $\bm{y}$ is some measurement that is required by the chosen estimation method~\eqref{eq:LumpedDisturbanceObserver}.
\end{asmp}

Constructing a lumped disturbance observer as in Assumption~\ref{asmp:LumpedDisturbanceObserver} is straightforward. Various designs have been proposed (see~\cite{Li&Chen.DOBC}, for instance) and many guarantee the performance indicated by Lemma~\ref{lem:Estimator_DistTolerent}, wherein an exponentially convergent smooth observer ensures boundedness of the estimate error as long as the disturbance is bounded.
%\todo[inline,color=green]{I moved Lemma~2 from Appendix to here.}
%The lemma is useful because many observer designs ensure exponential convergence.
\begin{lem} \label{lem:Estimator_DistTolerent}
Consider the system
\begin{align*}
    \dot{\bm{x}}^* &= \bm{f}(\bm{x}^*,\bm{u})
\end{align*}
and the observer
\begin{align*}
    \dot{\hat{\bm{x}}}^* = \bm{f}_{\rm o}(\hat{\bm{x}}^*,\bm{u},\bm{h}(\bm{x}^*))
\end{align*}
where $\bm{f}$ and $\bm{f}_{\rm o}$ are continuously differentiable. Suppose
\begin{align*}
    \norm{\hat{\bm{x}}^*(t)-\bm{x}^*(t)} \leq k_1\norm{\hat{\bm{x}}^*(t_0)-\bm{x}^*(t_0)}e^{-\lambda_1(t-t_0)}
\end{align*}
where $k$ and $\lambda$ are positive constants. Then, for the system
\begin{align*}
    \dot{\bm{x}} &= \bm{f}(\bm{x},\bm{u})+\bm{d}
\end{align*}
where $\norm{\bm{d}} \leq b_d$ for some positive constant $b_d$, the observer
\begin{align*}
    \dot{\hat{\bm{x}}} = \bm{f}_{\rm o}(\hat{\bm{x}},\bm{u},\bm{h}(\bm{x}))
\end{align*}
gives
\begin{align*}
    \norm{\hat{\bm{x}}(t)-\bm{x}(t)} \leq \max{\left\{k_2\norm{\hat{\bm{x}}(t_0)-\bm{x}(t_0)}e^{-\lambda_2(t-t_0)},\epsilon\right\}}
\end{align*}
where $k_2$, $\lambda_2$, and $\epsilon$ are positive constants.
\end{lem}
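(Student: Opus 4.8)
The plan is to recast the statement as an ultimate-boundedness (input-to-state) result for the estimation-error dynamics, treating $\bm{d}$ as a bounded additive perturbation of an exponentially stable system. First I would introduce the nominal error $\bm{e}^* := \hat{\bm{x}}^* - \bm{x}^*$, whose dynamics are
\begin{align*}
    \dot{\bm{e}}^* &= \bm{f}_{\rm o}(\bm{x}^*+\bm{e}^*,\bm{u},\bm{h}(\bm{x}^*)) - \bm{f}(\bm{x}^*,\bm{u}) =: \bm{g}(\bm{e}^*,\bm{x}^*,\bm{u}).
\end{align*}
Setting $\bm{e}^*(t_0)=\bm{0}$ in the hypothesized exponential bound forces $\bm{e}^*(t)\equiv\bm{0}$, so $\bm{g}(\bm{0},\cdot,\cdot)=\bm{0}$, and the assumed inequality is precisely uniform exponential stability of the origin $\bm{e}^*=\bm{0}$, with $\bm{x}^*(\cdot)$ and $\bm{u}(\cdot)$ entering as exogenous time dependence. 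The key observation is that the disturbed error $\bm{e} := \hat{\bm{x}}-\bm{x}$ obeys the same vector field with an additive term, $\dot{\bm{e}} = \bm{g}(\bm{e},\bm{x},\bm{u}) - \bm{d}$, so $\bm{d}$ is a bounded perturbation of an exponentially stable flow.

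Next I would invoke the converse Lyapunov theorem for exponential stability \cite{Khalil.NonlinearSys}. Since $\bm{f}$ and $\bm{f}_{\rm o}$ are continuously differentiable and the trajectories of interest stay in a compact set, $\partial\bm{g}/\partial\bm{e}$ is bounded, and the theorem furnishes a function $V(t,\bm{e})$ and positive constants $c_1,\dots,c_4$ with
\begin{align*}
    c_1\norm{\bm{e}}^2 \leq V(t,\bm{e}) \leq c_2\norm{\bm{e}}^2, \quad
    \norm{\frac{\partial V}{\partial \bm{e}}} \leq c_4\norm{\bm{e}},
\end{align*}
together with $\frac{\partial V}{\partial t} + \frac{\partial V}{\partial \bm{e}}\bm{g}(\bm{e},\cdot,\cdot) \leq -c_3\norm{\bm{e}}^2$ along the nominal field.

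Then I would differentiate $V$ along the disturbed dynamics, using $\norm{\bm{d}}\leq b_d$:
\begin{align*}
    \dot V &= \frac{\partial V}{\partial t} + \frac{\partial V}{\partial \bm{e}}\big(\bm{g}(\bm{e},\cdot,\cdot)-\bm{d}\big)
    \leq -c_3\norm{\bm{e}}^2 + c_4 b_d\norm{\bm{e}}.
\end{align*}
For any $\theta\in(0,1)$ this yields $\dot V \leq -(1-\theta)c_3\norm{\bm{e}}^2 \leq -\frac{(1-\theta)c_3}{c_2}V$ whenever $\norm{\bm{e}} \geq \mu := c_4 b_d/(\theta c_3)$. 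The ultimate-boundedness theorem \cite{Khalil.NonlinearSys} then produces exactly the claimed estimate, with $k_2 = \sqrt{c_2/c_1}$, $\lambda_2 = (1-\theta)c_3/(2c_2)$, and $\epsilon = \sqrt{c_2/c_1}\,\mu$; the $\max$ structure simply records the exponential decay of $\norm{\bm{e}}$ until it reaches the residual ball of radius $\epsilon$.

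The main obstacle I anticipate is justifying that the converse Lyapunov function built for the \emph{nominal} error flow remains valid along the \emph{disturbed} trajectory, since $\bm{g}$ is now evaluated at $\bm{x}(t)$ rather than $\bm{x}^*(t)$. This requires the exponential bound to be uniform in $t_0$ and in the exogenous signal and the constants $c_1,\dots,c_4$ to be uniform, which in turn rests on confining $\bm{x}$ and $\hat{\bm{x}}$ to a compact set so that the continuous differentiability of $\bm{f}$ and $\bm{f}_{\rm o}$ delivers uniform Lipschitz and gradient bounds. This uniformity (supplied in the present application by the boundedness of the closed-loop trajectories from Lemma~\ref{lem:Bounded_x}) is the delicate point on which the whole argument turns.
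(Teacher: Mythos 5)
Your proposal is correct and follows essentially the same route as the paper's proof: both invoke the converse Lyapunov theorem for exponential stability (Khalil, Thm.~4.14) on the nominal error $\bm{e}^*=\hat{\bm{x}}^*-\bm{x}^*$, differentiate the resulting function along the disturbed error dynamics to obtain $\dot{V} \leq -c_3\norm{\bm{e}}^2 + c_4 b_d \norm{\bm{e}}$, and conclude exponential decay outside a residual ball of radius proportional to $c_4 b_d / c_3$ (your $\theta$-splitting with $\mu = c_4 b_d/(\theta c_3)$ is the same device as the paper's choice of $M_e > c_4 b_d/c_3$). The uniformity caveat you flag at the end --- that the Lyapunov function is constructed along the nominal trajectory $\bm{x}^*(t)$ but must be evaluated along the disturbed trajectory $\bm{x}(t)$ --- is indeed passed over silently in the paper's proof, which substitutes $\hat{\bm{x}},\bm{x}$ for $\hat{\bm{x}}^*,\bm{x}^*$ in the decrease inequality without comment, so your explicit attention to it is a point in your version's favor rather than a deviation.
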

\begin{proof}
Let $\bm{e}^*:=\hat{\bm{x}}^*-\bm{x}^*$. By~\cite[Thm.~4.14]{Khalil.NonlinearSys}, there is a differentiable function $V(\bm{e}^*,t)$ satisfying
\begin{gather*}
    c_1\norm{\bm{e}^*}^2 \leq V(\bm{e}^*,t) \leq c_2\norm{\bm{e}^*}^2 \\
    \frac{\partial V}{\partial t}+\frac{\partial V}{\partial \bm{e}^*}\Big(\bm{f}_{\rm o}(\hat{\bm{x}}^*,\bm{u},\bm{h}(\bm{x}^*))-\bm{f}(\bm{x}^*,\bm{u})\Big) \leq -c_3\norm{\bm{e}^*}^2 \\
    \norm{\frac{\partial V}{\partial \bm{e}^*}} \leq c_4\norm{\bm{e}^*}
\end{gather*}
where $c_1$, $c_2$, and $c_3$ are positive constants. Let $\bm{e}:=\hat{\bm{x}}-\bm{x}$. We have
\begin{align*}
    \dot{V}(\bm{e},t) &=
    \frac{\partial V}{\partial t}+\frac{\partial V}{\partial \bm{e}}\left(\bm{f}_{\rm o}(\hat{\bm{x}},\bm{u},\bm{h}(\bm{x}))-\bm{f}(\bm{x},\bm{u})\right)
        -\frac{\partial V}{\partial \bm{e}}\bm{d} \\
    &\leq
        -c_3\norm{\bm{e}}^2+c_4\norm{\bm{e}}b_d
        =
        \left(-c_3+\frac{c_4 b_d}{\norm{\bm{e}}}\right)\norm{\bm{e}}^2
\end{align*}
and
\begin{align} \label{f:Vdot_quadratic}
    \dot{V}(\bm{e},t)
    \leq
        \left(-c_3+\frac{c_4 b_d}{\norm{\bm{e}}}\right)\norm{\bm{e}}^2
    <
        0 \quad\forall\; \norm{\bm{e}} > \frac{c_4 b_d}{c_3}
\end{align}
Relationship~\eqref{f:Vdot_quadratic} implies
\begin{align*}
    -c_3+\frac{c_4 b_d}{\norm{\bm{e}}} < 0
    \quad\forall\; \norm{\bm{e}} > \frac{c_4 b_d}{c_3}
\end{align*}
Thus, there exists a positive value $M_e$ where $M_e > \frac{c_4 b_d}{c_3}$ such that, for $\norm{\bm{e}} \geq M_e$,
\begin{align*}
    \dot{V}(\bm{e},t)
    \leq
        \left(-c_3+\frac{c_4 b_d}{\norm{\bm{e}}}\right)\norm{\bm{e}}^2
    \leq
        \left(-c_3+\frac{c_4 b_d}{M_e}\right)\norm{\bm{e}}^2
    <
        -c^*\norm{\bm{e}}^2
\end{align*}
where $c^*$ is a positive constant. By \cite[Thm~4.10]{Khalil.NonlinearSys}, we obtain
\begin{align*}
    \forall\;\norm{\bm{e}} \geq M_e, \quad
    \norm{\bm{e}(t)} \leq
        \left(\frac{c_2}{c_1}\right)^{\frac{1}{2}}\norm{\bm{e}(t_0)}e^{-\frac{c^*}{2 c_2}(t-t_0)}
\end{align*}
Taking $k_2 = \left(\frac{c_2}{c_1}\right)^{\frac{1}{2}}$, $\lambda_2 = -\frac{c^*}{2 c_2}$, and $\epsilon = M_e$ completes the proof.
\end{proof}

We note that the lumped disturbance $\bm{d}_u$ given in~(\ref{f:du}) is bounded, since $\bm{x}$ and $d$ are bounded and $\bm{\phi}_W(\cdot)$ and $\bm{\phi}_V(\cdot)$ are piecewise differentiable.

Let $\bm{e}_{d_u} := \hat{\bm{d}}_u-\bm{d}_u$. Assumption~\ref{asmp:LumpedDisturbanceObserver} says there exists a time $T_{e_{d_u}}$ such that
\begin{align} \label{f:edu_Performance}
    \norm{\bm{e}_{d_u}} \leq \epsilon_{d_u} \quad\forall\; t \geq T_{e_{d_u}}
\end{align}
Recognize that $\hat{d}$ and $\hat{\bm{d}}_u$ satisfy the expression
\begin{align*}
        \bm{b}\Lambda_{\rm r}\hat{d} &= \hat{\bm{d}}_u +\bm{b}\Lambda_{\rm r}\Big(\hat{\bm{k}}_x^T\bm{x}+(\hat{k}_r-1)r \\
        &\quad
        -(\hat{\bm{V}}-\bm{V}_{\rm r})^T\bm{\phi}_V(\bm{x})-\hat{\bm{W}}^T\bm{\phi}_W(\bm{x})\Big) \numberthis\label{eq:DistObserver_Summary_DisturbanceEstimate}
\end{align*}
Let $e_d := \hat{d}-d$. By~\eqref{eq:DistObserver_Summary_DisturbanceEstimate},~\eqref{f:du}, and~\eqref{eq:DesiredAdpativeControl_LinearSystem_Dist}, we obtain
\begin{align*}
    &\bm{b}\Lambda{e_d} = \bm{b}\Lambda_{\rm r}(\hat{d}-d)+\bm{b}(\Lambda-\Lambda_{\rm r})(\hat{d}-d) \\
    &=
    \hat{\bm{d}}_u + \bm{b}\Lambda_{\rm r}\Big(\hat{\bm{k}}_x^T\bm{x}+({\hat{k}_r}-1)r
            -(\hat{\bm{V}}-\bm{V}_{\rm r})^T\bm{\phi}_V(\bm{x})
            -\hat{\bm{W}}^T\bm{\phi}_W(\bm{x})\Big) \\
            &\quad
            +\bm{b}(\Lambda-\Lambda_{\rm r})\hat{d}-\bm{b}\Lambda{d} \\
    &=
    (\bm{A}-\bm{A}_{\rm r})\bm{x}
        +\bm{b}\Big(
                (\Lambda\bm{V}-\Lambda_{\rm r}\bm{V}_{\rm r})^T\bm{\phi}_V(\bm{x})
                    +\Lambda\bm{W}^T\bm{\phi}_W(\bm{x})
            \Big)
        +\bm{e}_{d_u} \\
        &\quad
            +\bm{b}(\Lambda-\Lambda_{\rm r})
            \Big(
                \hat{\bm{k}}_x^T\bm{x}+{\hat{k}_{\rm r}}r
                -\hat{\bm{V}}^T\bm{\phi}_V(\bm{x})
                -\hat{\bm{W}}^T\bm{\phi}_W(\bm{x})
                +u_{\rm drj}+\hat{d}
            \Big) \\
        &\quad
            +\bm{b}\Lambda_{\rm r}\Big(\hat{\bm{k}}_x^T\bm{x}+({\hat{k}_r}-1)r
                -(\hat{\bm{V}}-\bm{V}_{\rm r})^T\bm{\phi}_V(\bm{x})
                -\hat{\bm{W}}^T\bm{\phi}_W(\bm{x})\Big)
\end{align*}
Finally, by~\eqref{eq:MatchingCondition_LinearSys}, we find
\begin{align} \label{eq:ed_equation}
    \bm{b}\Lambda{e_d} = \bm{b}\Lambda\tilde{u}_{\rm adp} +\bm{e}_{d_u}  + \bm{b}(\Lambda-\Lambda_{\rm r})\eta
\end{align}
where
\begin{align}
    \tilde{u}_{\rm adp} &:=
    \tilde{\bm{k}}_x^T\bm{x}+\tilde{k}_{\rm r}r-\tilde{\bm{V}}^T\bm{\phi}_V(\bm{x})-\tilde{\bm{W}}^T\bm{\phi}_W(\bm{x}) \label{f:u_adp} \\
    \eta &:= u_{\rm drj}+\hat{d} \label{f:eta}
\end{align}
The fact that $\bm{x}$ and $r$ are bounded and $\bm{\phi}_W(\cdot)$ and $\bm{\phi}_V(\cdot)$ are piecewise differentiable ensures $\tilde{u}_{\rm adp}$ is bounded. Because $u_{\rm drj}$ is bounded as well, we have
\begin{align} \label{f:eta_Performance}
    |\eta| \leq \epsilon_{\eta} \quad\forall\; t \geq T_{\eta}
\end{align}
where $\epsilon_{\eta}$ is a finite constant and $T_{\eta}$ is some finite convergence time. Along with~\eqref{f:para_Performance}, ~\eqref{f:edu_Performance}, and~\eqref{f:eta_Performance}, there exists a $T \geq \max{\{T_{\rm p},T_{e_{d_u}},T_{\eta}\}}$ such that
\begin{align} \label{f:ed_Performance}
    |e_d| \leq b_{e_d} \quad\forall\; t \geq T
\end{align}
where
\begin{align} \label{f:b_ed}
    b_{e_d} := \beta_{\rm adp}(\bm{x},r) + \frac{1}{|\Lambda|\left(\bm{b}^T\bm{b}\right)}
    \left(\norm{\bm{b}}\epsilon_{d_u}+|\Lambda-\Lambda_{\rm r}|\epsilon_{\eta}\right)
\end{align}
wherein
\begin{align} \label{f:b_utilde_adp}
    \beta_{\rm adp}(\bm{x},r) :=
    b_{k_x}\norm{\bm{x}}+b_{k_{\rm r}}|r|+b_{V}\norm{\bm{\phi}_V(\bm{x})}+b_{W}\norm{\bm{\phi}_W(\bm{x})}
\end{align}
which is an upper bound of~\eqref{f:u_adp}.

The value of $\beta_{\rm adp}$ can be quite small if we can tightly bound the adaptive parameters within a small neighborhood of their true values (i.e., $\bm{k}_x^*$, $\bm{k}_{\rm r}^*$, $\bm{V}$, and $\bm{W}$) using the projection algorithm. Of course, this depends on our knowledge of the uncertainty in the system parameters (i.e., $\bm{A}$, $\bm{V}$, $\bm{W}$, and $\Lambda$). The values
$$
    \epsilon_{d_u} \quad\text{and}\quad \epsilon_{\eta}
$$
appearing in the disturbance estimate error bound~(\ref{f:b_ed}) are determined by our design of the lumped disturbance observer~\eqref{eq:LumpedDisturbanceObserver} and the disturbance-rejection term $u_{\rm drj}$. The value $\epsilon_{d_u}$ is the \emph{lumped disturbance estimate convergence radius} in~(\ref{f:edu_Performance}). The parameter $\epsilon_{\eta}$ is the \emph{disturbance estimate rejection radius} that will be determined by our choice of $u_{\rm drj}$ whose purpose is to cancel the estimated disturbance $\hat{d}$.

%%%%%%%%%%%%%%%%%%%%%%%%%%%%%%%%%%%%%%%%%%%%%%%%%%%%%%%%%%
\section{A Disturbance Estimation Tradeoff}
\label{sec:DifficultyInDisturbanceEstimation}
%%%%%%%%%%%%%%%%%%%%%%%%%%%%%%%%%%%%%%%%%%%%%%%%%%%%%%%%%%
Following the last comment in the preceding section, one may want to choose
\begin{align} \label{eq:umod_Static}
    u_{\rm drj}=-\hat{d}
\end{align}
which according to~\eqref{f:eta} leads to $\epsilon_{\eta}=0$. Choosing $u_{\rm drj}$ as in~\eqref{eq:umod_Static}, however, can interfere with the disturbance estimation process; the lumped disturbance estimate convergence radius $\epsilon_{d_u}$ can become large. To see this, consider the dynamics of the lumped disturbance $\bm{d}_u$:
\begin{align*}
    \dot{\bm{d}}_u &= (\bm{A}-\bm{A}_{\rm r})\dot{\bm{x}}
        +\bm{b}
            \Big(
                (\Lambda\bm{V}-\Lambda_{\rm r}\bm{V}_{\rm r})^T\dot{\bm{\phi}}_V(\bm{x}) \\
                &\quad
                +(\Lambda-\Lambda_{\rm r})\dot{u}+\Lambda\bm{W}^T\dot{\bm{\phi}}_W(\bm{x})+\Lambda\dot{d}
            \Big)
\end{align*}
If $\bm{d}_u$ varies rapidly, the lumped disturbance observer will require high gains to generate an accurate ultimate estimate \cite{Li&Chen.DOBC}. The most likely source of fast-varying components is the input dynamics. Due to the input gain uncertainty (i.e., the unknown value $\Lambda-\Lambda_{\rm r}$), $\dot{\bm{d}}_u$ depends on $\dot{u}$, which is directly influenced by the adaptation rules~\eqref{eq:DistObserver_Summary_AdaptiveLaw} and by the disturbance observer dynamics
through the disturbance rejection input~\eqref{eq:umod_Static}. The value $\abs{\dot{u}}$ is typically large because adaptive laws and observers are usually quite aggressive for fast convergence.

Aside from adjusting the gains inherent in the adaptation laws and the disturbance observer to make the dynamics less aggressive (sacrificing fast convergence), one may design a ``slow-varying'' $u_{\rm drj}$, which still aims to cancel the (estimated) disturbance as in~\eqref{eq:umod_Static}, but which avoids creating fast variations in the control input $u$. Although the disturbance estimate rejection radius $\epsilon_{\eta}$ may be larger in this case, the lumped disturbance estimate convergence radius $\epsilon_{d_u}$ may be reduced so that the comprehensive ultimate bound~\eqref{f:b_ed} is improved. In other words, there is a trade-off when choosing design parameters between reducing $\epsilon_{d_u}$ and $\epsilon_{\eta}$. Managing this trade-off may be aided by a time scale separation between fast estimation of the lumped disturbance and slower rejection of the estimated disturbance.

%%%%%%%%%%%%%%%%%%%%%%%%%%%%%%%%%%%%%%%%%%%%%%%%%%%%%%%%%%
\section{Design of the disturbance-rejection term}
\label{sec:IntuitiveDisturbanceRejection}
%%%%%%%%%%%%%%%%%%%%%%%%%%%%%%%%%%%%%%%%%%%%%%%%%%%%%%%%%%
Instead of~\eqref{eq:umod_Static}, the disturbance rejection term is redefined by a magnitude- and rate-limited integral action with reset, wherein the integral action will be constructed based on the error $\eta$ in~\eqref{f:eta}. Let $f_{\rm drj}(t)$ represent the integrand of the integral action, a term limiting the changing rate of $u_{\rm drj}(t)$ that is yet to be designed, and let $\bar{u}_{\rm drj} > 0$ represent the magnitude limit on $u_{\rm drj}$. Formally, we write
\begin{equation} \label{eq:umod_Dynamical}
%\begin{split}
u_{\rm drj}(t) =
        \begin{cases}
            \int_{t^-}^{t}f_{\rm drj}(\tau) d\tau
            &\text{if $|u_{\rm drj}(t^-)| < \bar{u}_{\rm drj}$} \\
            -\hat{d}(t)
            &\text{if $|u_{\rm drj}(t^-)| \geq \bar{u}_{\rm drj}$ and $|\hat{d}(t)| < \bar{u}_{\rm drj}$} \\
            0
            &\text{if $|u_{\rm drj}(t^-)| \geq \bar{u}_{\rm drj}$ and $|\hat{d}(t)| \geq \bar{u}_{\rm drj}$}
        \end{cases}
%\end{split}
\end{equation}
where $t^- < t$ is the most recent time at which the conditions in~\eqref{eq:umod_Dynamical} were evaluated. In practice, the conditions are continuously evaluated and the unsaturated disturbance rejection control command in~\eqref{eq:umod_Dynamical} -- i.e., the top line on the right --is determined by a dynamic equation. When this computed command saturates, the input switches to simply cancel the estimated disturbance, provided this disturbance estimate is within the saturation limit. If both the computed command and the disturbance estimate exceed the saturation limit, then the computed command resets to zero. Both the second and the third cases re-trigger the first (unsaturated) case. Finally, note that according to Lemma~\ref{lem:Bounded_x}, the system state remains bounded during such a reset mechanism.

To design $f_{\rm drj}$, consider the time derivative of $\eta$ in~\eqref{f:eta} for the unsaturated case, i.e., the top-most case in~\eqref{eq:umod_Dynamical}:
\begin{align} \label{f:etadot_Original}
    \dot{\eta} = f_{\rm drj}+\dot{\hat{d}}
\end{align}
By taking the derivative of~\eqref{eq:DistObserver_Summary_DisturbanceEstimate}, the term $\dot{\hat{d}}$ satisfies
\begin{align*}
    &\bm{b}\Lambda_{\rm r}\dot{\hat{d}} = \bm{f}_{\rm DO}(\hat{\bm{d}}_u,\bm{y})
        +\bm{b}\Lambda_{\rm r}\Big(\dot{\hat{\bm{k}}}_x^T\bm{x}+{\dot{\hat{k}}_r}r
        -\dot{\hat{\bm{V}}}^T\bm{\phi}_V(\bm{x})-\dot{\hat{\bm{W}}}^T\bm{\phi}_W(\bm{x})\Big) \\
    &
        +\bm{b}\Lambda_{\rm r}\Big(\hat{\bm{k}}_x^T\dot{\bm{x}}+(\hat{k}_r-1)\dot{r}
        -(\hat{\bm{V}}-\bm{V}_{\rm r})^T\frac{\partial \bm{\phi}_V(\bm{x})}{\partial \bm{x}}\dot{\bm{x}}
        -\hat{\bm{W}}^T\frac{\partial \bm{\phi}_W(\bm{x})}{\partial \bm{x}}\dot{\bm{x}}\Big)
        \numberthis\label{f:dhatdot}
\end{align*}
Because $\dot{\bm{x}}$ appears in~\eqref{f:dhatdot}, $\dot{\hat{d}}$ is unknown (even though we know $\hat{d}$ by~\eqref{eq:DistObserver_Summary_DisturbanceEstimate}). Let $\dot{\hat{d}}^*$ denote the \emph{estimate} of $\dot{\hat{d}}$, which is defined by
\begin{align*}
    &\bm{b}\Lambda_{\rm r}\dot{\hat{d}}^* = \bm{f}_{\rm DO}(\hat{\bm{d}}_u,\bm{y})
        +\bm{b}\Lambda_{\rm r}\Big(\dot{\hat{\bm{k}}}_x^T\bm{x}+{\dot{\hat{k}}_r}r
            -\dot{\hat{\bm{V}}}^T\bm{\phi}_V(\bm{x})-\dot{\hat{\bm{W}}}^T\bm{\phi}_W(\bm{x})\Big) \\
    &\quad
        +\bm{b}\Lambda_{\rm r}
        \Big(\hat{\bm{k}}_x^T\dot{\bm{x}}^*+(\hat{k}_r-1)\dot{r}
        -(\hat{\bm{V}}-\bm{V}_{\rm r})^T\frac{\partial \bm{\phi}_V(\bm{x})}{\partial \bm{x}}\dot{\bm{x}}^* \\
        &\qquad
        -\hat{\bm{W}}^T\frac{\partial \bm{\phi}_W(\bm{x})}{\partial \bm{x}}\dot{\bm{x}}^*\Big) \numberthis\label{f:dhatstardot}
\end{align*}
where
\begin{align} \label{f:xdot_Star}
    \dot{\bm{x}}^* := \bm{A}_{\rm r}\bm{x} + \bm{b}\Lambda_{\rm r}(\bm{V}_{\rm r}^T\bm{\phi}_V(\bm{x})+{u})+\hat{\bm{d}}_{u}
\end{align}
The definition in~\eqref{f:xdot_Star} can be regarded as an estimate of $\dot{\bm{x}}$, which is computed by the equation~\eqref{eq:NonlinearSystem_Disturbance_Rewritten} wherein the unknown lumped disturbance $\bm{d}_u$ is replaced by the estimate $\hat{\bm{d}}_u$. Letting $e_{\dot{\hat{d}}} := \dot{\hat{d}}^*-\dot{\hat{d}}$, we obtain
\begin{equation} \label{f:Err_DistEst}
    e_{\dot{\hat{d}}}
    =
    \left(\bm{b}^T\bm{b}\right)^{-1}\bm{b}^T
    \Big(
        \hat{\bm{k}}_x^T-(\hat{\bm{V}}-\bm{V}_{\rm r})^T\frac{\partial \bm{\phi}_V(\bm{x})}{\partial \bm{x}}
            -\hat{\bm{W}}^T\frac{\partial \bm{\phi}_W(\bm{x})}{\partial \bm{x}}
            \Big)\bm{e}_{d_u}
\end{equation}
By~\eqref{f:para_Performance},~\eqref{f:edu_Performance}, and the fact that $\bm{x}$ is bounded, there exists a $T_{e_{\dot{\hat{d}}}}$ satisfying $T_{e_{\dot{\hat{d}}}} \geq \max{\{T_{\rm p},T_{e_{d_u}}\}}$ such that
\begin{align} \label{f:edhatdot_Performance}
    |e_{\dot{\hat{d}}}| \leq b_{e_{\dot{\hat{d}}}} \quad\forall\; t \geq T_{e_{\dot{\hat{d}}}}
\end{align}
where
\begin{align*}
    b_{e_{\dot{\hat{d}}}} &:=
    \left(\bm{b}^T\bm{b}\right)^{-1}\norm{\bm{b}}
    \Big(
        \left(\norm{\bm{k}_x^*}+b_{k_x}\right)
        +\left(b_V+\norm{\bm{V}-\bm{V}_{\rm r}}\right)\norm{\frac{\partial \bm{\phi}_V(\bm{x})}{\partial \bm{x}}} \\
        &\quad
        +
        \left(\norm{\bm{W}}+b_{W}\right)\norm{\frac{\partial \bm{\phi}_W(\bm{x})}{\partial \bm{x}}}
    \Big)\epsilon_{d_u}
\end{align*}
Referring to Assumption~\ref{asmp:LumpedDisturbanceObserver}, the value of $\epsilon_{d_u}$ can be kept quite small by properly designing the lumped disturbance observer.

We are now ready to define the integrand $f_{\rm drj}$ for the integral action term in the disturbance rejection control law~\eqref{eq:umod_Dynamical}. To prevent $\|\dot{\bm{d}}_u\|$ from going to an extreme value due to a large $|f_{\rm drj}|$, we anticipate
\begin{align} \label{f:f_drj}
    f_{\rm drj} =
        \begin{cases}
            \varphi_{\rm drj}
            &\text{if $|\varphi_{\rm drj}| < \bar{f}_{\rm drj}$} \\
            \sgn{\varphi_{\rm drj}}\bar{f}_{\rm drj}         &\text{if $|\varphi_{\rm drj}| \geq \bar{f}_{\rm drj}$}
        \end{cases}
\end{align}
where $\varphi_{\rm drj}$ is a function to be designed and $\bar{f}_{\rm drj}$ is a prescribed limit on $|f_{\rm drj}|$, i.e., a rate limit on $u_{\rm drj}$. Let
\begin{align} \label{f:udotStar}
    \varphi_{\rm drj} := -k_{\eta}\eta-\dot{\hat{d}}^*
\end{align}
for some positive constant $k_{\eta}$. Substituting~\eqref{f:f_drj} into~\eqref{f:etadot_Original} with $\varphi_{\rm drj}$ given in~\eqref{f:udotStar}, we obtain
\begin{align*}
    \dot{\eta} &= -k_{\eta}\eta+d_{\eta} \numberthis\label{f:etahatdot} \\
    d_{\eta} &=
        \begin{cases}
            -e_{\dot{\hat{d}}}
                &\text{if $|\varphi_{\rm drj}| < \bar{f}_{\rm drj}$} \\
            -e_{\dot{\hat{d}}}+\delta \varphi_{\rm drj}
                &\text{if $|\varphi_{\rm drj}| \geq \bar{f}_{\rm drj}$}
        \end{cases}
\end{align*}
where
$
    \delta \varphi_{\rm drj} = \sgn{\varphi_{\rm drj}}\bar{f}_{\rm drj}-\varphi_{\rm drj}
$. The function $V_{\eta}(\eta) = \frac{1}{2}\eta^2$ shows that
\begin{align} \label{f:LypunovFunciton_etahat}
    \dot{V}_{\eta}(\eta)
    = -k_{\eta}\eta^2 + \eta d_{\eta}
    \leq 0 \qquad\forall\; |\eta| \geq \frac{|d_{\eta}|}{k_{\eta}}
\end{align}
If $|\varphi_{\rm drj}| < \bar{f}_{\rm drj}$, input-to-state stability of~\eqref{f:etahatdot} ensures that $\eta$ is bounded since $e_{\dot{\hat{d}}}$ is bounded. In this case, we have $\epsilon_{\eta} = b_{e_{\dot{\hat{d}}}}/k_{\eta}$ and $T_{\eta} = T_{e_{\dot{\hat{d}}}}$. If $|\varphi_{\rm drj}| \geq \bar{f}_{\rm drj}$, then~\eqref{f:etahatdot} suggests $\eta$ might diverge because the term $\delta \varphi_{\rm drj}$ has not been shown to be remain bounded. However, the second and third cases in~\eqref{eq:umod_Dynamical} ensure that $u_{\rm drj}$ is reset to a value such that $|u_{\rm drj}| \leq \bar{u}_{\rm drj}$, meaning $\eta$ is also reset by~\eqref{f:eta} and remains bounded.

System~\eqref{f:etahatdot} can be regarded as a low-pass filter with $d_{\eta}$ as the input and $\eta$ as the output, which attenuates the high-frequency components of $d_{\eta}$ that come from the lumped disturbance estimate error $\bm{e}_{d_u}$ in~\eqref{f:Err_DistEst}. Although a large $k_{\eta}$ could potentially induce a large $\eta$ due to the \emph{peaking phenomenon} \cite[Thm~11.4]{Khalil.NonlinearSys}, the saturation on $u_{\rm drj}$ prevents severe excursions. The parameter $k_{\eta}$ provides additional freedom to adjust the control performance. The simulations in Section~\ref{sec:Example} reveal that the disturbance rejection with the integral action~\eqref{eq:umod_Dynamical} can outperform the intuitive design~\eqref{eq:umod_Static}.

%%%%%%%%%%%%%%%%%%%%%%%%%%%%%%%%%%%%%%%%%%%%
\section{Performance analysis}
\label{sec:PerformanceAnalysis}
%%%%%%%%%%%%%%%%%%%%%%%%%%%%%%%%%%%%%%%%%%%%
The condition concerning whether the magnitude of $u_{\rm drj}$ exceeds $\bar{u}_{\rm drj}$ determines one of two operating modes:
\begin{enumerate}
    \item \textbf{Performance enhancement.}
        When $u_{\rm drj}$ is \emph{not} saturated, the control input attempts to counteract disturbance $d$.
    \item \textbf{Robust stabilization.}
        When $u_{\rm drj}$ \emph{is} saturated, the control input is the standard MRAC law.
\end{enumerate}
Robustness of stability to the disturbance $d$ is guaranteed by Theorem~\ref{lem:Bounded_x}. In the performance improvement mode, $u_{\rm drj}$ actively seeks a value to counter the disturbance.

The following theorem indicates the \emph{best} performance provided by the proposed method.

\begin{thm} \label{thm:BestPerformance_Dynamical}
Let $\bm{A}_{\rm r}^T\bm{P}+\bm{P}\bm{A}_{\rm r} =: -\bm{Q}$ and let $\ubar{\lambda}_{Q}$ denote the minimum eigenvalue of $\bm{Q}$. If there exists a time $t^*$ such that $|u_{\rm drj}| < \bar{u}_{\rm drj}$ for $t \geq t^*$, the proposed DOBAC~\eqref{eq:DesiredAdpativeControl_LinearSystem_Dist}--\eqref{f:v} with~\eqref{eq:LumpedDisturbanceObserver},~\eqref{eq:umod_Dynamical}, and~\eqref{f:edhatdot_Performance}--\eqref{f:f_drj} guarantees
$$
    \epsilon_{\rm r}
        =\frac{2}{\ubar{\lambda}_{Q}}\left(\frac{b_{e_{\dot{\hat{d}}}}}{k_{\eta}}+b_{e_d}\right)\norm{\Lambda\bm{P}\bm{b}}
$$
in the tracking error condition~\eqref{f:trackingError_Performance}.
\end{thm}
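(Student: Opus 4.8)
The plan is to reduce everything to the Lyapunov estimate already established in the proof of Lemma~\ref{lem:Bounded_x}. That computation shows $\dot V\le -\ubar{\lambda}_Q\norm{\bm e}^2+2\norm{\Lambda\bm P\bm b}\,\abs{u_{\rm drj}+d}\,\norm{\bm e}$, which is negative whenever $\norm{\bm e}\ge \tfrac{2\norm{\Lambda\bm P\bm b}\,\abs{u_{\rm drj}+d}}{\ubar{\lambda}_Q}$. Thus the claimed $\epsilon_{\rm r}$ will follow at once provided I bound the residual disturbance by $\abs{u_{\rm drj}+d}\le b_{e_{\dot{\hat d}}}/k_\eta+b_{e_d}$. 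The first step is therefore the key identity $u_{\rm drj}+d=\eta-e_d$, immediate from $\eta:=u_{\rm drj}+\hat d$ in~\eqref{f:eta} and $e_d:=\hat d-d$, which splits the task into separately bounding $\abs{\eta}$ and $\abs{e_d}$.

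For $\abs{e_d}$ I simply invoke the ultimate bound $\abs{e_d}\le b_{e_d}$ of~\eqref{f:ed_Performance}--\eqref{f:b_ed}. For $\abs{\eta}$ I exploit the hypothesis $\abs{u_{\rm drj}}<\bar u_{\rm drj}$ on $[t^*,\infty)$, which keeps the closed loop in the top (unsaturated) branch of~\eqref{eq:umod_Dynamical} and rules out the reset branches; for the tightest (``best'') constant I additionally take the rate limit in~\eqref{f:f_drj} to be inactive, so that $f_{\rm drj}=\varphi_{\rm drj}$ and the $\eta$-dynamics collapse to the clean first-order filter $\dot\eta=-k_\eta\eta-e_{\dot{\hat d}}$ of~\eqref{f:etahatdot} with $d_\eta=-e_{\dot{\hat d}}$. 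Applying the input-to-state estimate~\eqref{f:LypunovFunciton_etahat} obtained from $V_\eta=\tfrac12\eta^2$, together with the error bound $\abs{e_{\dot{\hat d}}}\le b_{e_{\dot{\hat d}}}$ of~\eqref{f:edhatdot_Performance}, then yields the ultimate bound $\abs{\eta}\le b_{e_{\dot{\hat d}}}/k_\eta=\epsilon_\eta$. Combining the two pieces via the triangle inequality gives $\abs{u_{\rm drj}+d}\le b_{e_{\dot{\hat d}}}/k_\eta+b_{e_d}$, and substitution into the Lemma~\ref{lem:Bounded_x} estimate produces exactly the stated $\epsilon_{\rm r}$.

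The step I expect to be the main obstacle is making the $\abs{\eta}$ bound rigorous and simultaneous with the others. The ISS estimate only delivers $\abs{\eta}\le\epsilon_\eta$ after a transient, so I must consolidate the convergence times $T_{\rm p}$, $T_{e_{d_u}}$, $T_{e_{\dot{\hat d}}}$ and the filter settling time into a single threshold and argue that $t^*$ may be taken beyond it (or redefine $T_{\rm r}$ accordingly); this is precisely where the ``if there exists $t^*$'' hypothesis is used. I also need to confirm the regime underlying the ``best performance'' claim: the magnitude hypothesis excludes the second and third cases of~\eqref{eq:umod_Dynamical}, while assuming the rate saturation inactive gives $d_\eta=-e_{\dot{\hat d}}$ rather than $d_\eta=-e_{\dot{\hat d}}+\delta\varphi_{\rm drj}$, which is the case minimizing $\epsilon_\eta$. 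Finally I should flag the mild circularity that $b_{e_d}$ in~\eqref{f:b_ed} itself contains $\epsilon_\eta$; this is harmless, since both are fixed design constants, so the resulting expression for $\epsilon_{\rm r}$ is well defined.
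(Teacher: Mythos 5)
Your proposal is correct and follows essentially the same route as the paper's own proof: the identity $u_{\rm drj}+d=\eta-e_d$ from~\eqref{f:eta}, the bound $\epsilon_\eta=b_{e_{\dot{\hat d}}}/k_\eta$ obtained from~\eqref{f:LypunovFunciton_etahat} with $d_\eta=-e_{\dot{\hat d}}$, the bound $|e_d|\le b_{e_d}$, and substitution into the MRAC Lyapunov estimate~\eqref{f:DesiredPerformance}. In fact you are somewhat more explicit than the paper, which silently assumes the rate limit in~\eqref{f:f_drj} is inactive (so that $d_\eta=-e_{\dot{\hat d}}$) and does not discuss consolidating the convergence times, both of which you correctly flag.
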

\begin{proof}
From~\eqref{f:LypunovFunciton_etahat} with the fact that $d_{\eta}=-e_{\dot{\hat{d}}}$, we obtain
$\epsilon_{\eta}=\frac{b_{e_{\dot{\hat{d}}}}}{k_{\eta}}$. Definition~\eqref{f:eta} gives
$$
    \eta-e_d = u_{\rm drj}+d
$$
Consider the Lyapunov function~\eqref{f:LyapunovFunction_Linear} again. Following the standard derivation of MRAC with the projection operator, we obtain~\eqref{f:DesiredPerformance}. By~\eqref{eq:umod_Dynamical}, in the case where $u_{\rm drj}$ is not saturated, we have
\begin{align*}
    \dot{V}
        &\leq
            -\ubar{\lambda}_{Q}\norm{\bm{e}}^2
            +2\norm{\Lambda\bm{P}\bm{b}(\eta-e_d)}\norm{\bm{e}} \\
        &\leq
            -\ubar{\lambda}_{Q}\norm{\bm{e}}^2
            +2
            \left(\frac{b_{e_{\dot{\hat{d}}}}}{k_{\eta}}+b_{e_d}\right) \norm{\Lambda\bm{P}\bm{b}}\norm{\bm{e}}
                \quad\forall\; t \geq T \\
        &\leq
            0 \quad\forall\;\norm{\bm{e}}\geq\frac{2}{\ubar{\lambda}_{Q}}\left(\frac{b_{e_{\dot{\hat{d}}}}}{k_{\eta}}+b_{e_d}\right)\norm{\Lambda\bm{P}\bm{b}} \quad\forall\; t \geq T
\end{align*}
which proves the claim.
\end{proof}

The reason Theorem~\ref{thm:BestPerformance_Dynamical} represents the best performance is that it describes the outcome when the DOBAC operates in the performance enhancement mode. Otherwise, the system operates in the robust stabilization mode, which guarantees boundedness but does not aim to actively cancel the disturbance. In implementation, it would be difficult to know in advance how the control law would switch between these two modes; it is even possible that the time $t^*$ does not exist, meaning the controller must remain in robust stabilization mode. On the other hand, the existence of $t^*$ indicates the controller eventually settles into the mode of performance improvement.

%%%%%%%%%%%%%%%%%%%%%%%%%%%%%%%%%%%%%%%%%%%%
\section{Example: Position control of a mass-spring-damper system with a nonlinear spring}
\label{sec:Example}
%%%%%%%%%%%%%%%%%%%%%%%%%%%%%%%%%%%%%%%%%%%%
Consider the following system:
\begin{equation} \label{eq:1DoFSys}
\begin{split}
    \dot{x}_1 &= x_2 \\
    \dot{x}_2 &= - a_1x_1 - a_2{x}_1^3 - b{x}_2 + \Lambda(u + d)
\end{split}
\end{equation}
where
$
    d = 5\sin{(0.5t)}
$.
The system coefficients $a_1$, $a_2$, $b$, and $\Lambda$ are unknown constants, but we know that $-0.5 \leq a_1 \leq 1.5$, $-0.5 \leq a_2 \leq 1.5$, $-0.5 \leq b \leq 1.5$, and $1 \leq \Lambda \leq 1.4$. (The simulations adopt the values $a_1 = 0.5$, $a_2 = 0.5 $, $b = 0.5$, and $\Lambda = 1.2$.) The system can be rewritten as
\begin{equation} \label{eq:1DoFSys_ObserverDesign}
\begin{split}
    \dot{\bm{x}} = \bm{A}_{\rm r}\bm{x} + \bm{b}\Lambda_{\rm r}(V_{\rm r}^T\phi_V(\bm{x})+{u})+\bm{d}_{u}
\end{split}
\end{equation}
where $\bm{x} = [x_1,x_2]^T$, $\Lambda_{\rm r} = 1$, $V_{\rm r} = 1$, and
\begin{gather*}
    \bm{A}_{\rm r} =
    \left[
    \begin{array}{cc}
        0 & 1 \\
        -1 & -1
    \end{array}
    \right]
    ,\quad
    \bm{b} =
    \left[
    \begin{array}{c}
        0 \\
        1
    \end{array}
    \right]
    ,\quad
    \phi_V(\bm{x}) = x_1^3
\end{gather*}
and $\bm{d}_{u}$ represents the remaining terms in the $\bm{x}$ dynamics. In~\cite{Chen&Woolsey.TCST23}, a global disturbance observer was designed to estimate $\bm{d}_{u}$ for~\eqref{eq:1DoFSys_ObserverDesign}. Adopting that observer, and referring the reader to \cite{Chen&Woolsey.TCST23} for the details, Assumption~\ref{asmp:LumpedDisturbanceObserver} holds.  Comparing~\eqref{eq:1DoFSys_ObserverDesign} and~\eqref{eq:NonlinearSystem_Disturbance}, we obtain
$$
    \bm{A} =
    \left[
    \begin{array}{cc}
        0 & 1 \\
        -a_1 & -b
    \end{array}
    \right]
    ,\quad
    V = \frac{a_2}{\Lambda}
    ,\quad
    \phi_W(\bm{x}) = 0
$$
The control objective is to have $x_1$ track the signal $x_{1{\rm r}} := \sin{(t)}$. We use the reference model~\eqref{eq:ReferenceModel} with
$$
    r =
        \left[
        \begin{array}{cc}
            1 & 1
        \end{array}
        \right]
        \bm{x}_{\rm r}
        -\sin{(t)}
$$

Let $\bm{k}^*_x =: [k^*_{x1},k^*_{x2}]^T$. The matching condition~\eqref{eq:MatchingCondition_LinearSys} holds with
\begin{gather*}
    \frac{1}{1.4} \leq k^*_{\rm r} \leq \frac{1}{1}, \quad
    \frac{-1.5}{1.4} \leq k^*_1 \leq \frac{0.5}{1} \\
    \frac{-1.5}{1.4} \leq k^*_2 \leq \frac{0.5}{1}, \quad
    \frac{-0.5}{1.4} \leq V \leq \frac{1.5}{1}
\end{gather*}
Knowing the range of each unknown coefficient, we design the following three convex functions for the projection algorithm:
\begin{align*}
    f_{\rm r}(\theta) &= \alpha_{\rm r}(\theta-\theta_{\rm r}^*)^2+\bar{\theta}_{\rm r} \\
    f_{x}(\bm{\theta}) &= (\bm{\theta}-\bm{\theta}_{\rm x}^*)^T
    \left[
    \begin{array}{cc}
        \alpha_{x1} & 0 \\
        0 & \alpha_{x2}
    \end{array}
    \right]
    (\bm{\theta}-\bm{\theta}_x^*)
    +\bar{\bm{\theta}}_{x} \\
    f_{\rm V}(\theta) &= \alpha_{\rm V}(\theta-\theta_{\rm V}^*)^2+\bar{\theta}_{\rm V}
\end{align*}
where, with $\bm{\theta}_x^* =: [\theta_{x1}^*,\theta_{x2}^*]^T$ and $\bar{\bm{\theta}}_x =: [\bar{\theta}_{x1},\bar{\theta}_{x2}]^T$,
\begin{gather*}
    \theta_{\rm r}^* = \frac{1}{2}\left(\frac{1}{1.4}+\frac{1}{1}\right), \quad
    \theta_{x1}^* = \frac{1}{2}\left(\frac{-1.5}{1.4}+\frac{0.5}{1}\right) \\
    \theta_{x2}^* = \frac{1}{2}\left(\frac{-1.5}{1.4}+\frac{0.5}{1}\right), \quad
    \theta_{\rm V}^* = \frac{1}{2}\left(\frac{-0.5}{1.4}+\frac{1.5}{1}\right) \\
    \bar{\theta}_{\rm r} = \bar{\theta}_{x1} = \bar{\theta}_{x2} = \bar{\theta}_{\rm V} = -1
\end{gather*}
The coefficients $\alpha_{\rm r}$, $\alpha_{x1}$, $\alpha_{x2}$, and $\alpha_{\rm V}$ are chosen so that the sets $\Omega_{\rm r} := \{ \theta \;|\; f_{\rm r}(\theta) \leq 1 \}$, $\Omega_x := \{ \bm{\theta} \;|\; f_{x}(\bm{\theta}) \leq 1 \}$, and $\Omega_{\rm V} := \{ \theta \;|\; f_{\rm V}(\theta) \leq 1 \}$
satisfy
\begin{align*}
    \Omega_{\rm r} &\supseteq \Omega_{\rm r}^* := \left\{ \theta \;|\; \frac{1}{1.4} \leq \theta \leq \frac{1}{1} \right\} \\
    \Omega_x &\supseteq \Omega_x^* := \left\{ [\theta_1,\theta_2]^T \;|\;
            \frac{-1.5}{1.4} \leq \theta_1 \leq \frac{0.5}{1}, \; \frac{-1.5}{1.4} \leq \theta_2 \leq \frac{0.5}{1} \right\} \\
    \Omega_{\rm V} &\supseteq \Omega_{\rm V}^* := \left\{ \theta \;|\; \frac{-0.5}{1.4} \leq \theta \leq \frac{1.5}{1} \right\}
\end{align*}
Specifically in the simulations, with $s_{\rm r} := \frac{1}{2}\left(\frac{1}{1}-\frac{1}{1.4}\right)$, $s_{x1} := \frac{1}{2}\left(\frac{0.5}{1}-\frac{-1.5}{1.4}\right)$, $s_{x2} := \frac{1}{2}\left(\frac{0.5}{1}-\frac{-1.5}{1.4}\right)$, and $s_{\rm V} := \frac{1}{2}\left(\frac{1.5}{1}-\frac{-0.5}{1.4}\right)$,
\begin{align*}
    \Omega_{\rm r} &= \left\{ \theta \;|\; \theta_{\rm r}^*-s_{\rm r}-0.1 \leq \theta \leq \theta_{\rm r}^*+s_{\rm r}+0.1 \right\} \\
    \Omega_x &= \left\{ [\theta_1,\theta_2]^T \;|\;
            \theta_{x1}^*-s_{x1}-0.1 \leq \theta_1 \leq \theta_{x1}^*+s_{x1}+0.1, \right. \\
            &\quad\left.
            \theta_{x2}^*-s_{x2}-0.1 \leq \theta_2 \leq \theta_{x2}^*+s_{x2}+0.1 \right\} \\
    \Omega_{\rm V} &= \left\{ \theta \;|\; \theta_{\rm V}^*-s_{\rm V}-0.1 \leq \theta \leq \theta_{\rm V}^*+s_{\rm V}+0.1 \right\}
\end{align*}
The control law uses $\bar{u}_{\rm drj} = 10$ and $\bar{f}_{\rm drj} = 5$ with $\gamma_{\rm r} = 1$ and
\begin{gather*}
\bm{\Gamma}_x =
    \left[
    \begin{array}{cc}
        1 & 0 \\
        0 & 1
    \end{array}
    \right]
,\quad
\bm{\Gamma}_{\rm V} =
    \left[
    \begin{array}{cc}
        1 & 0 \\
        0 & 1
    \end{array}
    \right]
,\quad
\bm{P} =
    \left[
    \begin{array}{cc}
        1.5 & 0.5 \\
        0.5 & 1
    \end{array}
    \right]
\end{gather*}
in the adaptive law. We do not need to choose $\bm{\Gamma}_{\rm W}$ because the associated nonlinearity $\phi_W(\bm{x})$ is zero.

As a baseline for comparison, Figure~\ref{fig:x1Tracking_TypicalAC} displays the result with $u_{\rm drj}=0$, i.e., using conventional MRAC. Figure~\ref{fig:x1Tracking_DOBAC} shows the results when incorporating different disturbance observer-based designs of $u_{\rm drj}$.  Figure~\ref{subfig:x1Tracking_StaticDOBAC} shows the results of using the control law~\eqref{eq:umod_Static}, which we refer to as \emph{direct DOBAC} (D-DOBAC), while Figure~\ref{subfig:x1Tracking_DynamicalDOBAC} illustrates the results of using the control law of Theorem~\ref{thm:BestPerformance_Dynamical}, which we refer to as \emph{integrating DOBAC} (I-DOBAC) with $k_{\eta} = 1$. It can be seen that the two DOBACs outperform the conventional MRAC, and the I-DOBAC performs better than the D-DOBAC; a comparison of the tracking error is shown in Figure~\ref{subfig:TrackingError_Comparison}. Figure~\ref{subfig:TrackingError_epsr} shows the upper bound $\epsilon_{\rm r}$ on the tracking error, computed according to Theorem~\ref{thm:BestPerformance_Dynamical}. This upper bound is much larger than the true tracking error, which is attributed to our limited knowledge of the unknown coefficients: one can see from~\eqref{f:b_ed} that the bound $b_{e_d}$ on the estimate error of the disturbance is significantly affected by $\beta_{\rm adp}$ in~\eqref{f:b_utilde_adp}. Reducing the uncertainty in the unknown coefficients would reduce $\beta_{\rm adp}$, as well.

Figure~\ref{fig:du_ComparisonDOBAC} illustrates the idea of limiting the changing rate of the disturbance-rejection term $u_{\rm drj}$ to prevent the lumped disturbance from varying quickly. Using the same disturbance observer, system~\eqref{eq:1DoFSys} under D-DOBAC is subject to a lumped disturbance of sharper peaks than the system under I-DOBAC after the initial trainsient response. Recalling the discussion of $f_{\rm drj}$ around~\eqref{f:f_drj}--\eqref{f:etahatdot}, this amplitude reduction might be attributed to the saturation of $f_{\rm drj}$ that was introduced to prevent $\|\dot{\bm{d}}_u\|$ from going to extreme values. Figure~\ref{fig:d} shows that the disturbance estimate $\hat{d}$ is more accurate in the case of I-DOBAC. As mentioned in Section~\ref{sec:DifficultyInDisturbanceEstimation}, it is easier to estimate a slower-varying $\bm{d}_u$, which gives a smaller $\epsilon_{d_u}$ leading to a smaller $b_{e_d}$ by~\eqref{f:b_ed}. This reasoning may explain the improved accuracy of $\hat{d}$ under I-DOBAC.

Next, consider $u_{\rm drj}$ for each DOBAC. In Figure~\ref{fig:umod_Static}, it can be seen $u_{\rm drj}$ never exceeds the saturation value and during the transient period, when $\hat{d}$ is too large to be used by the D-DOBAC, $u_{\rm drj}$ is zero (Figure~\ref{subfig:umod_StaticDOBAC_ZoomIn}). For I-DOBAC, as well, $u_{\rm drj}$ never exceeds the saturation value. Further, its changing rate never exceeds the prescribed maximum $\bar{f}_{\rm drj}$; in Figure~\ref{subfig:umod_DynamicalDOBAC_ZoomIn} one can see that $f_{\rm drj}$ becomes saturated within the first $2$ seconds. Figure~\ref{subfig:eta} shows the time history of $\eta$; it converges to a small value which, according to Theorem~\ref{thm:BestPerformance_Dynamical}, is bounded by $b_{e_{\dot{\hat{d}}}}/k_{\eta}$.

%%%
\begin{figure}[h!]
    \centering
    \includegraphics[width=0.425\textwidth]{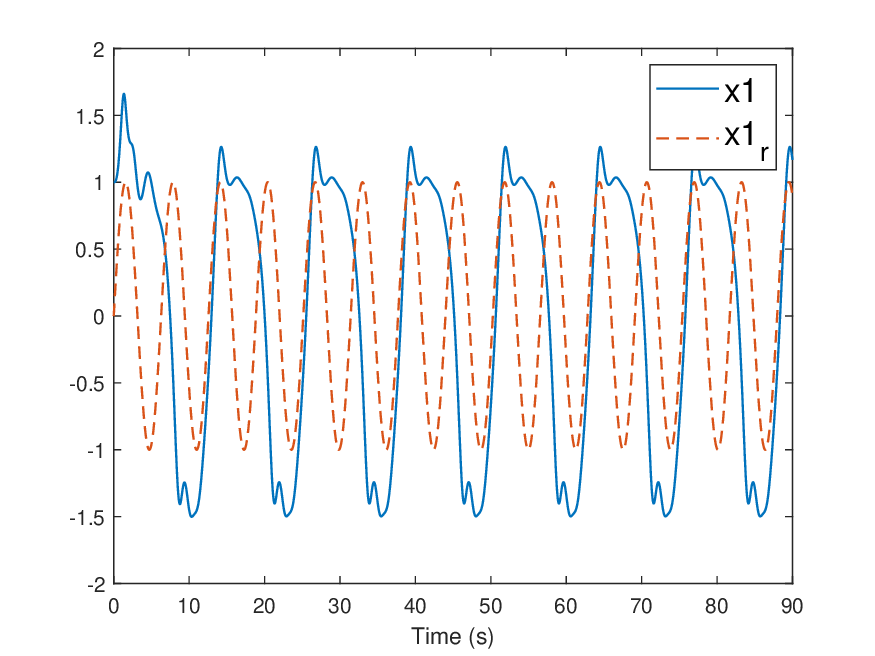}
    \caption{$x_1$ tracking with $u_{\rm drj}=0$. Dash: the desired trajectory $x_{1{\rm r}}$; solid: $x_1$.}
    \label{fig:x1Tracking_TypicalAC}
\end{figure}
%%%
\begin{figure}[h!]
\centering
    \begin{subfigmatrix}{2}
        \subfigure[Direct DOBAC]{
            \includegraphics[width=0.425\textwidth]{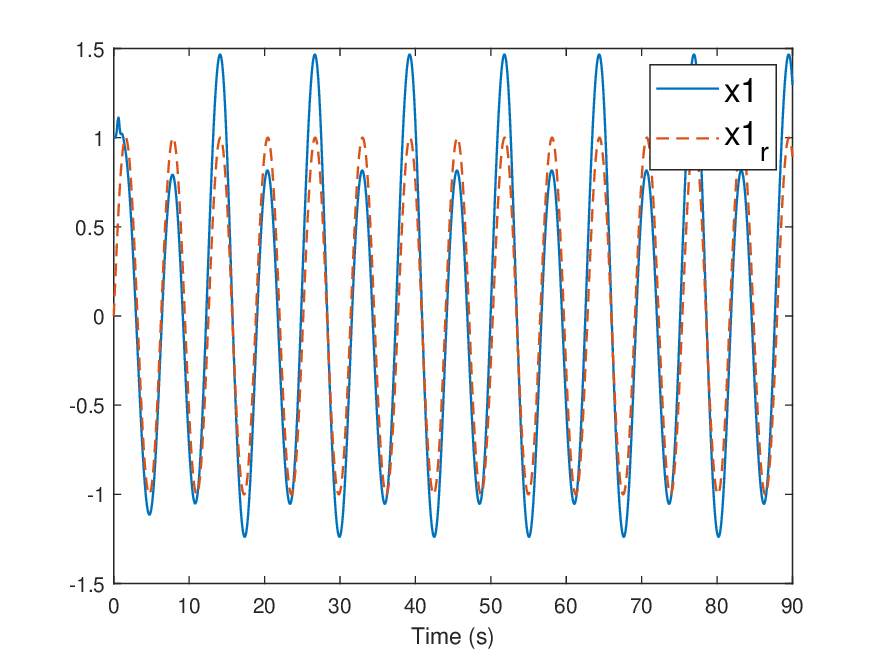}
            \label{subfig:x1Tracking_StaticDOBAC}}
        \subfigure[Integrating DOBAC]{
            \includegraphics[width=0.425\textwidth]{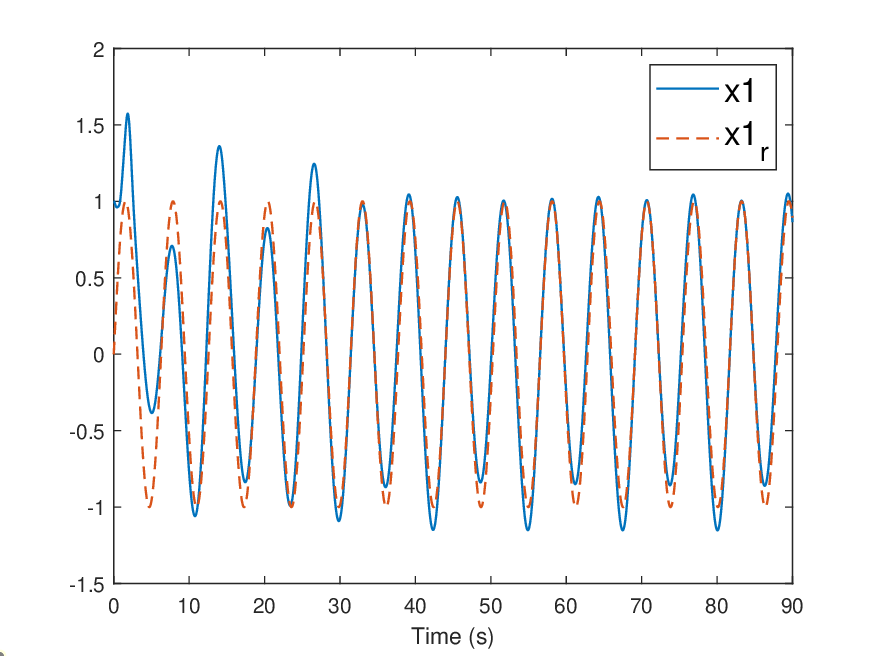}
            \label{subfig:x1Tracking_DynamicalDOBAC}}
    \end{subfigmatrix}
\caption{$x_1$ tracking with the DOBAC using different $u_{\rm drj}$. Dash: the desired trajectory $x_{1{\rm r}}$; solid: $x_1$.}
\label{fig:x1Tracking_DOBAC}
\end{figure}
%%%
\begin{figure}[h!]
\centering
    \begin{subfigmatrix}{2}
        \subfigure[Comparison of $\norm{\bm{e}}$ using different methods]{
            \includegraphics[width=0.425\textwidth]{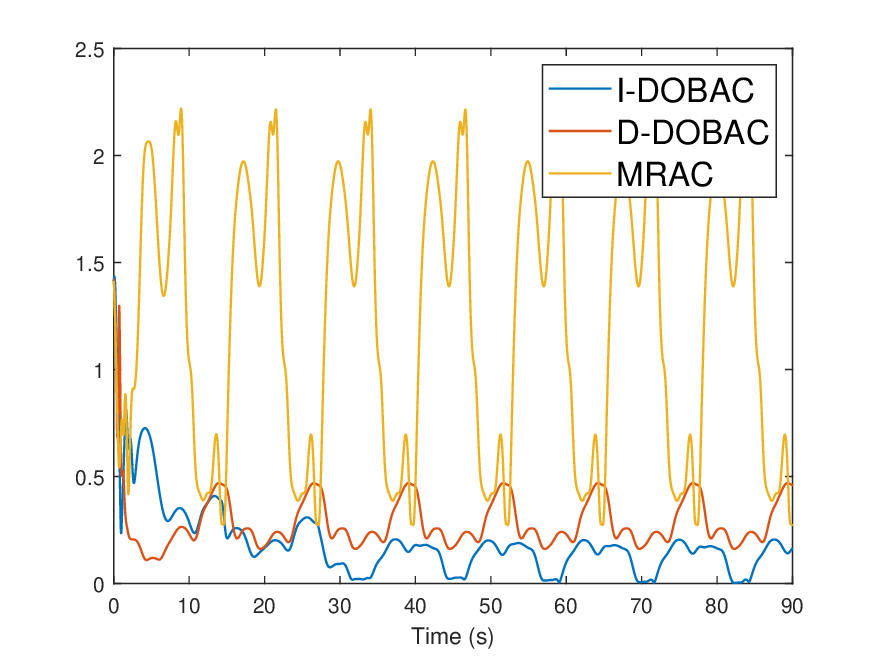}
            \label{subfig:TrackingError_Comparison}}
        \subfigure[$\norm{\bm{e}}$ using I-DOBAC, the upper bound $\epsilon_{\rm r}$, and $\beta_{\rm adp}$]{
            \includegraphics[width=0.425\textwidth]{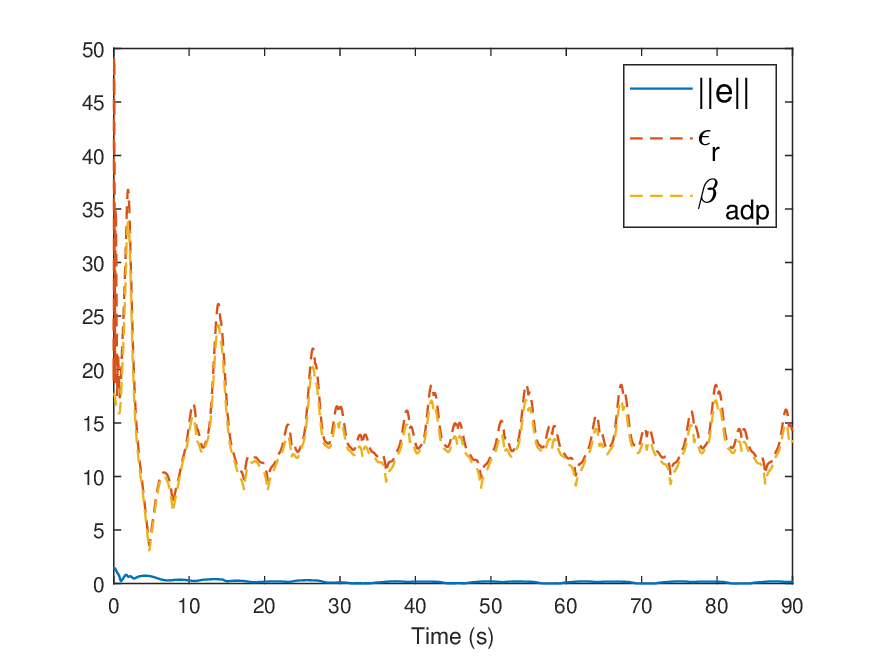}
            \label{subfig:TrackingError_epsr}}
    \end{subfigmatrix}
\caption{The tracking error history and the upper bound guaranteed by the integrating DOBAC}
\label{fig:TrackingError}
\end{figure}
\begin{figure}[h!]
    \centering
    \includegraphics[width=0.425\textwidth]{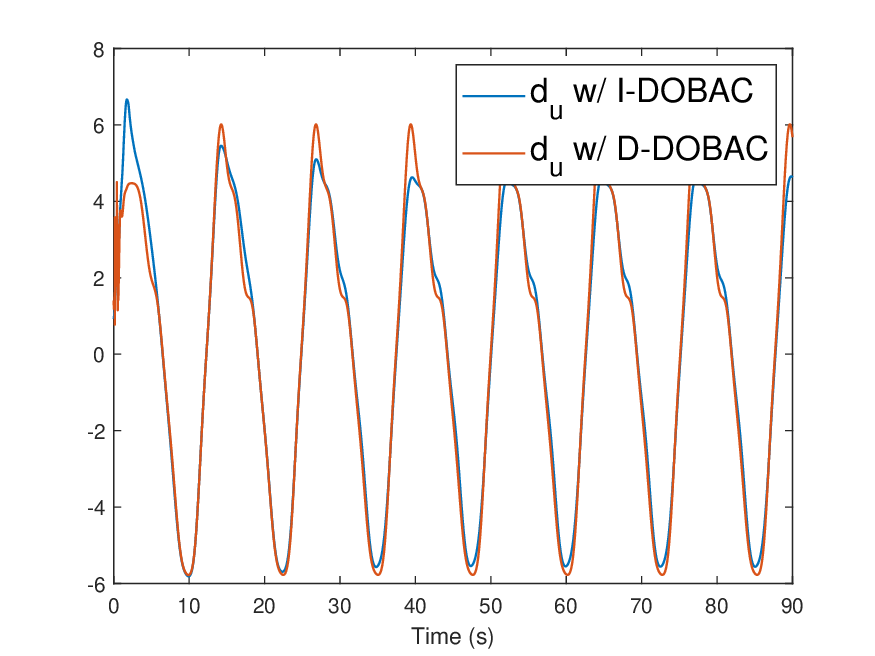}
    \caption{$\bm{d}_u$ with the DOBACs using different $u_{\rm drj}$}
    \label{fig:du_ComparisonDOBAC}
\end{figure}
%%%
\begin{figure}[h!]
\centering
    \begin{subfigmatrix}{2}
        \subfigure[Direct DOBAC]{
            \includegraphics[width=0.425\textwidth]{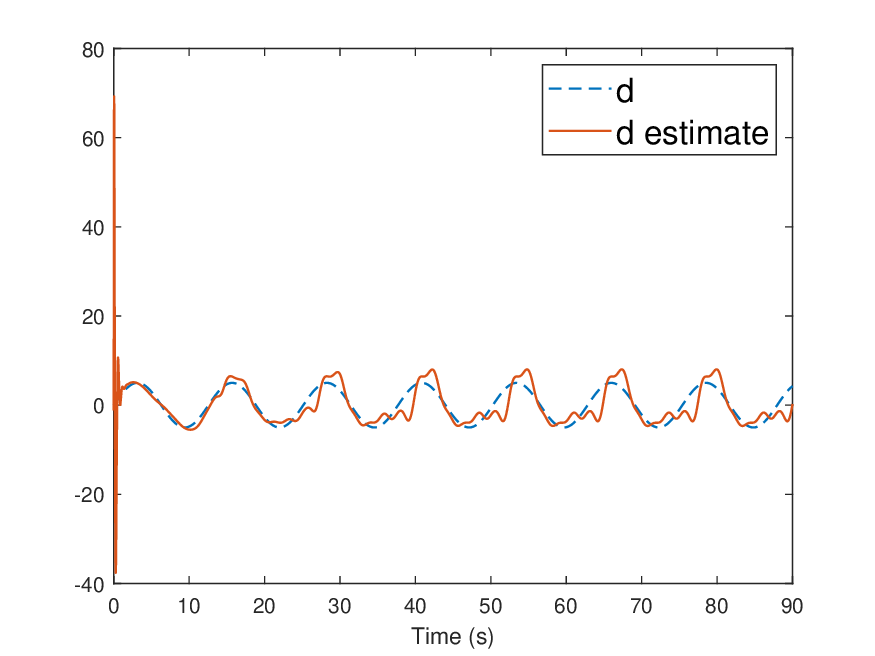}
            \label{subfig:d_StaticDOBAC}}
        \subfigure[Integrating DOBAC]{
            \includegraphics[width=0.425\textwidth]{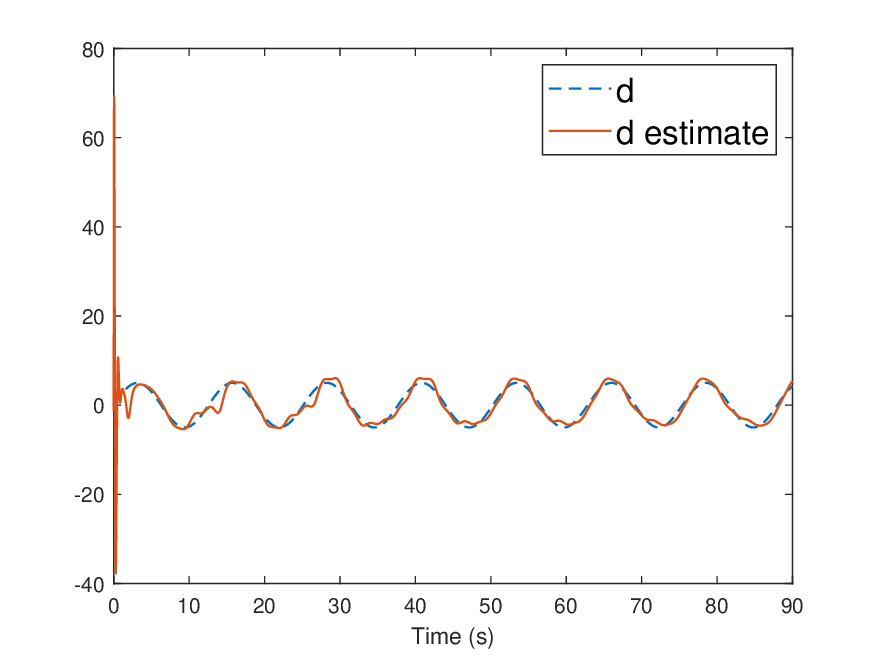}
            \label{subfig:d_DynamicalDOBAC}}
    \end{subfigmatrix}
\caption{Estimate of $d$ with the DOBAC using different $u_{\rm drj}$. Dash: $d$; solid: $\hat{d}$.}
\label{fig:d}
\end{figure}
%%%
\begin{figure}[h!]
\centering
    \begin{subfigmatrix}{2}
        \subfigure[]{
            \includegraphics[width=0.425\textwidth]{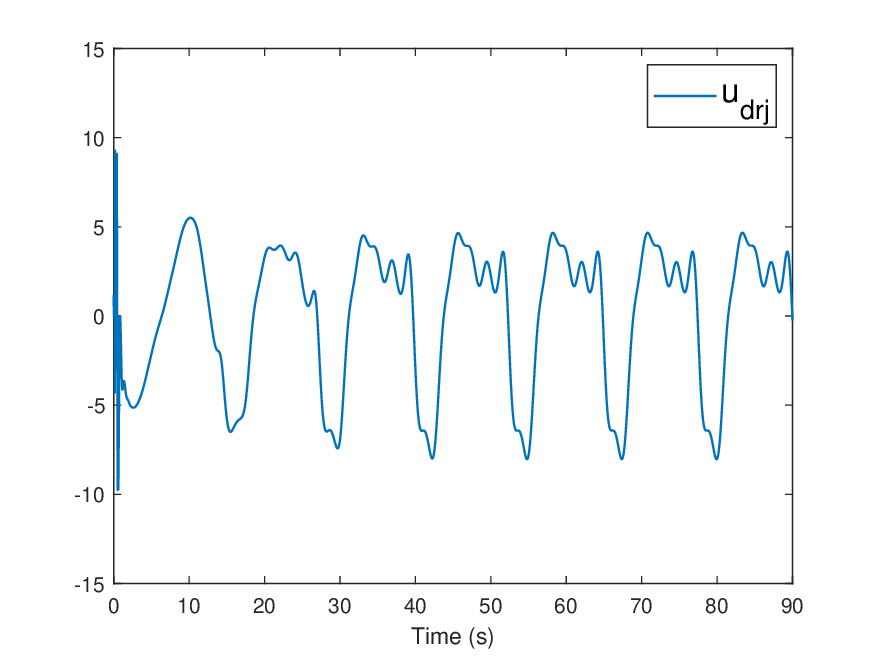}
            \label{subfig:umod_StaticDOBAC}}
        \subfigure[Zoom-in of Figure~\ref{subfig:umod_StaticDOBAC}]{
            \includegraphics[width=0.425\textwidth]{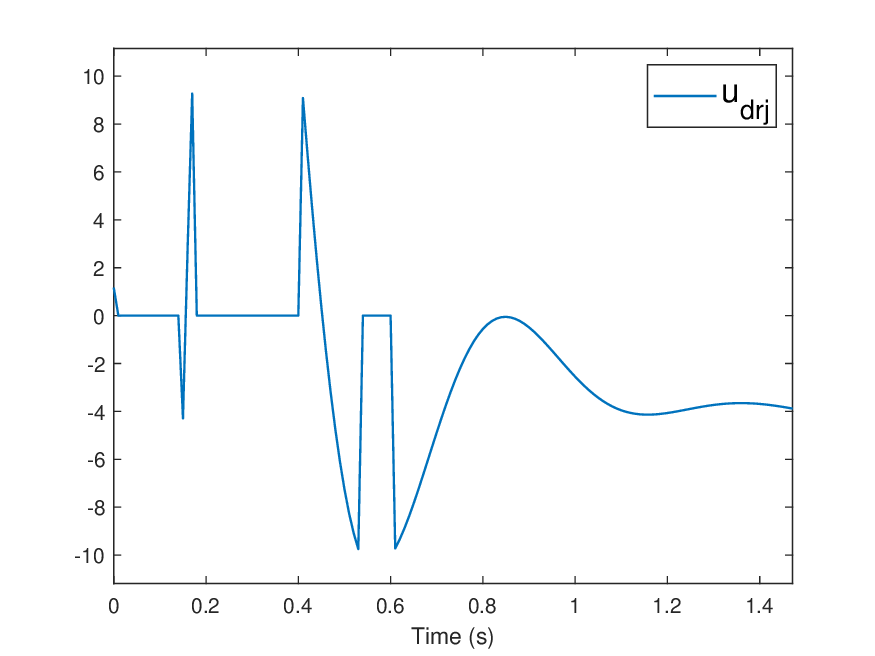}
            \label{subfig:umod_StaticDOBAC_ZoomIn}}
    \end{subfigmatrix}
\caption{$u_{\rm drj}$ of the direct DOBAC.}
\label{fig:umod_Static}
\end{figure}
%%%
\begin{figure}[h!]
\centering
    \begin{subfigmatrix}{3}
        \subfigure[]{
            \includegraphics[width=0.425\textwidth]{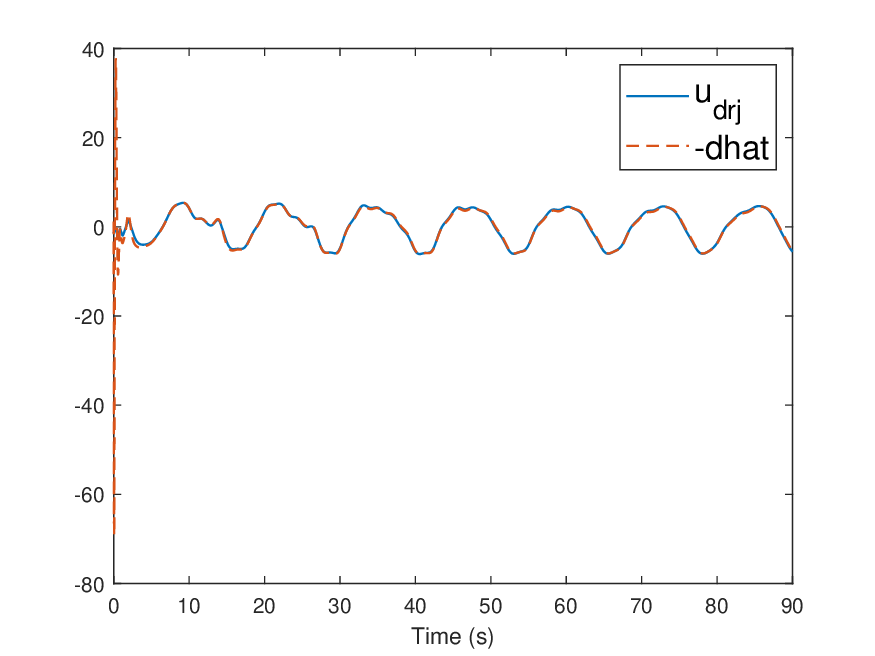}
            \label{subfig:umod_DynamicalDOBAC}}
        \subfigure[Zoom-in of Figure~\ref{subfig:umod_DynamicalDOBAC}]{
            \includegraphics[width=0.425\textwidth]{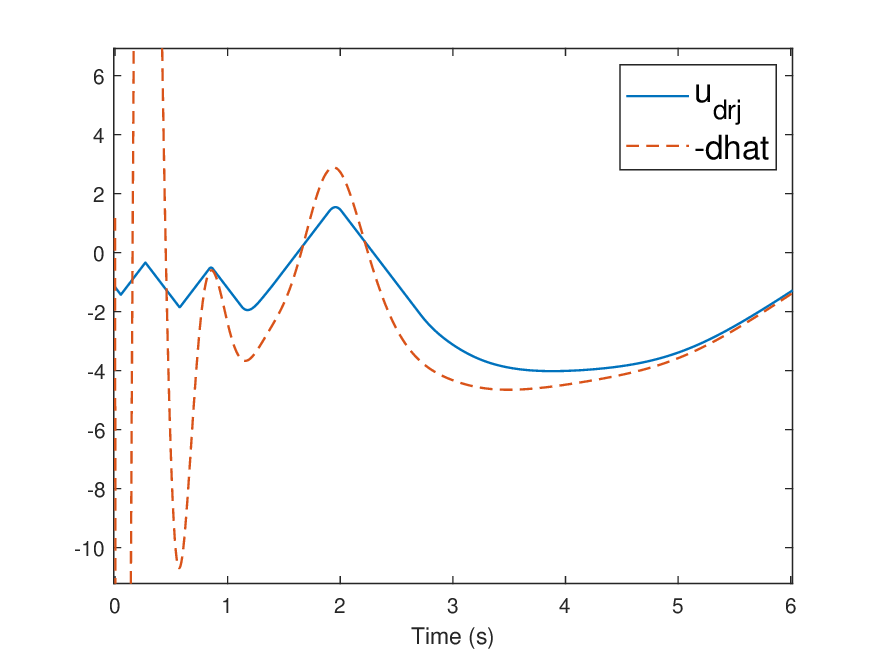}
            \label{subfig:umod_DynamicalDOBAC_ZoomIn}}
        \subfigure[$\eta$]{
            \includegraphics[width=0.425\textwidth]{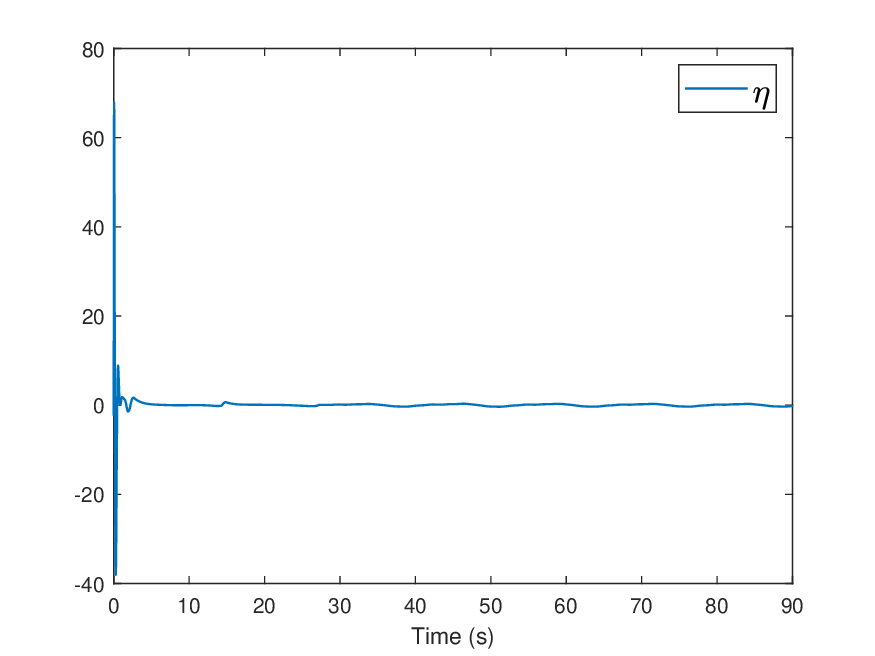}
            \label{subfig:eta}}
    \end{subfigmatrix}
\caption{$u_{\rm drj}$ of the integrating DOBAC and the auxiliary variable $\eta$. Dash: $-\hat{d}$; solid: $u_{\rm drj}$.}
\label{fig:umod_Dynamical}
\end{figure}

%%%%%%%%%%%%%%%%%%%%%%%%%%%%%%%%%%%%%%%%%%%%%%%%%%%%
\section{Remarks on the role of $k_{\eta}$}
\label{sec:KEtaRemarks}
%%%%%%%%%%%%%%%%%%%%%%%%%%%%%%%%%%%%%%%%%%%%%%%%%%%%
The parameter $k_{\eta}$ introduced in~\eqref{f:udotStar} may be regarded as a \emph{similarity index} for I-DOBAC relative to D-DOBAC. Supposing $|u_{\rm drj}| < \bar{u}_{\rm drj}$,
\eqref{f:LypunovFunciton_etahat} implies
\begin{align*}
    |\eta| \leq \max{\left\{|\eta(t_0)|e^{-k_{\eta}(t-t_0)},\frac{|d_{\eta}|}{k_{\eta}}\right\}}
\end{align*}
As $k_{\eta}$ increases, the convergence rate of $\eta$ increases and its ultimate bound $\frac{|d_{\eta}|}{k_{\eta}}$ decreases. That is, there exists a decreasing positive function $T_{\eta}^*(k_{\eta})$ such that
$$
    |\eta| \leq \frac{|d_{\eta}|}{k_{\eta}}
    \quad\forall\; t \geq T_{\eta}^*(k_{\eta})
$$
For a sufficiently large $k_{\eta}$, we have
\begin{align*} \label{f:umod_approximateStaic}
    u_{\rm drj} \approx -\hat{d}
    \quad\forall\; t \geq T_{\eta}^*(k_{\eta})
\end{align*}
meaning the I-DOBAC approximates the D-DOBAC after a short transient period. Figure~\ref{subfig:x1Tracking_DynamicalDOBAC_k1000} shows the $x_1$ tracking performance using $k_{\eta} = 1000$; the result is similar to that shown in Figure~\ref{subfig:x1Tracking_StaticDOBAC}. We can see from Figure~\ref{subfig:umod_DynamicalDOBAC_k1000} that $|u_{\rm drj}| < \bar{u}_{\rm drj}$ after the transient response.

Finally note that when $k_{\eta}$ is very small, we have
\begin{align*}
    \dot{\eta} \approx -e_{\dot{\hat{d}}}
\end{align*}
which produces excessive fluctuations in $u_{\rm drj}$ and poor tracking performance; see Figure~\ref{fig:x1Tracking&umod_DynamicalDOBAC_k0d001}. Fortunately, the design resets $u_{\rm drj}$ when it reaches an infeasible value, and a bounded $u_{\rm drj}$ guarantees a bounded $x_1$ by Lemma~\ref{lem:Bounded_x}. Figure~\ref{subfig:umod_DynamicalDOBAC_k0d001} shows $u_{\rm drj}$ remains in the assigned bound.

%%%
\begin{figure}[h!]
\centering
    \begin{subfigmatrix}{2}
        \subfigure[Dash: the desired trajectory $x_{1{\rm r}}$; solid: $x_1$.]{
            \includegraphics[width=0.475\textwidth]{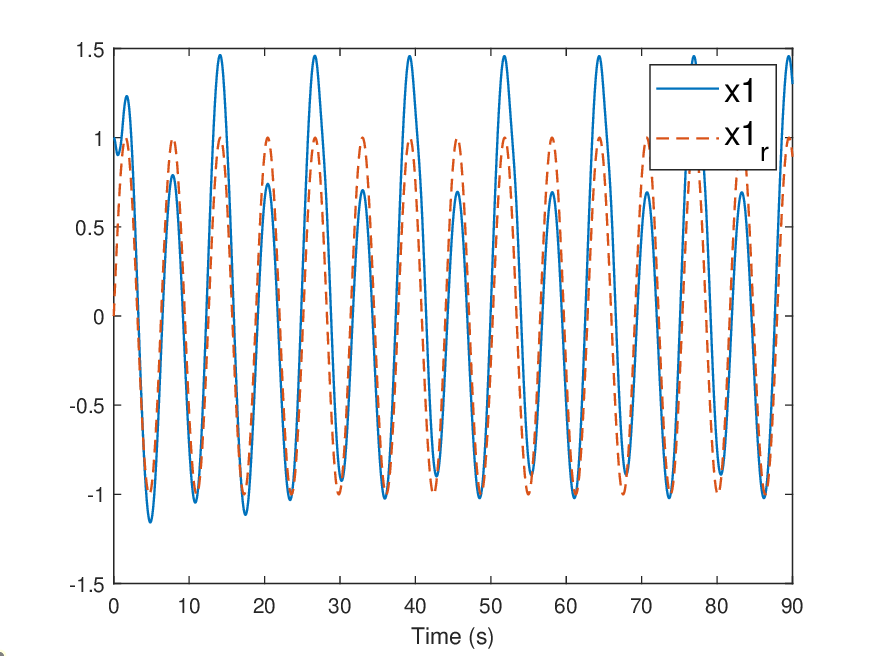}
            \label{subfig:x1Tracking_DynamicalDOBAC_k1000}}
        \subfigure[Dash: $-\hat{d}$; solid: $u_{\rm drj}$.]{
            \includegraphics[width=0.475\textwidth]{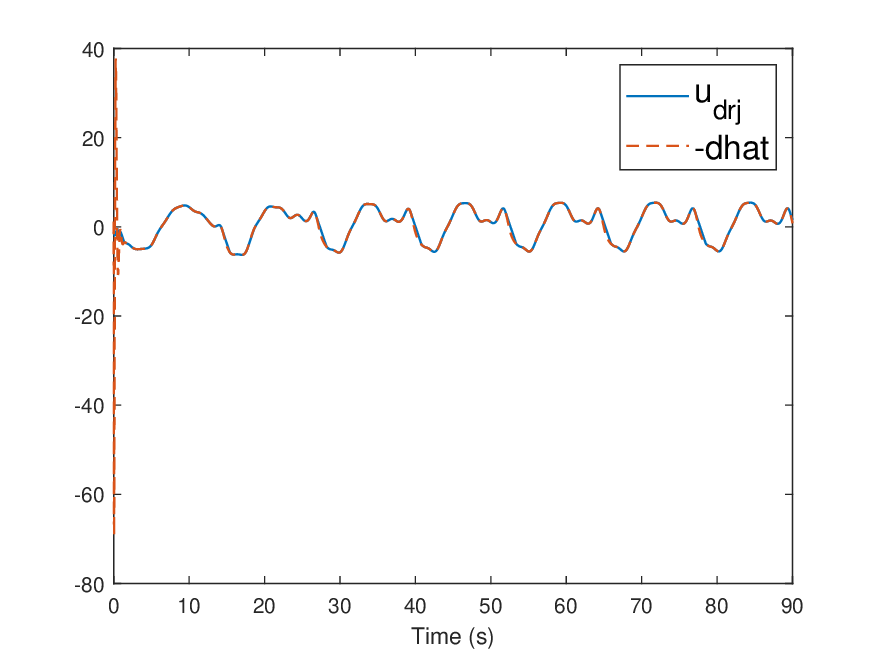}
            \label{subfig:umod_DynamicalDOBAC_k1000}}
    \end{subfigmatrix}
\caption{$x_1$ tracking and $u_{\rm drj}$ with the I-DOBAC using $k_{\eta}=1000$.}
\label{fig:x1Tracking&umod_DynamicalDOBAC_k1000}
\end{figure}
\begin{figure}[h!]
\centering
    \begin{subfigmatrix}{2}
        \subfigure[Dash: the desired trajectory $x_{1{\rm r}}$; solid: $x_1$.]{
            \includegraphics[width=0.475\textwidth]{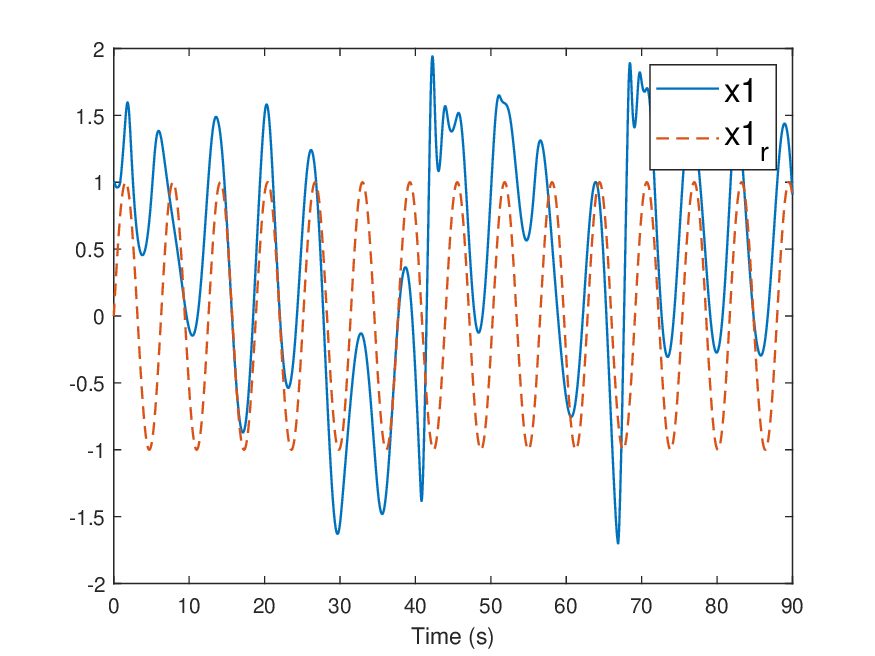}
            \label{subfig:x1Tracking_DynamicalDOBAC_k0d001}}
        \subfigure[Dash: $-\hat{d}$; solid: $u_{\rm drj}$.]{
            \includegraphics[width=0.475\textwidth]{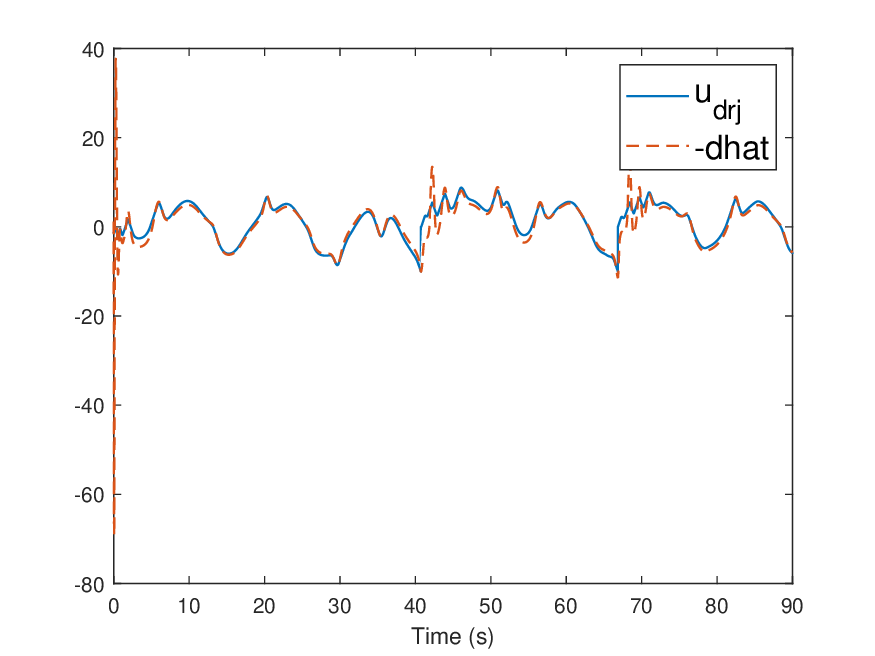}
            \label{subfig:umod_DynamicalDOBAC_k0d001}}
    \end{subfigmatrix}
\caption{$x_1$ tracking and $u_{\rm drj}$ with the I-DOBAC using $k_{\eta}=0.001$.}
\label{fig:x1Tracking&umod_DynamicalDOBAC_k0d001}
\end{figure}

%%%%%%%%%%%%%%%%%%%%%%%%%%%%%%%%%%%%%%%%%
\section{Conclusions}
\label{sec:Conclusion}
%%%%%%%%%%%%%%%%%%%%%%%%%%%%%%%%%%%%%%%%%
An extension to conventional model reference adaptive control is proposed that enhances closed-loop system performance by estimating and actively countering a time-varying disturbance. Any disturbance observer can be incorporated into the proposed scheme and there is no need to redesign the underlying model reference adaptive control law. The disturbance rejection strategy incorporates magnitude- and rate-limited integral action to prevent fast-varying components of the disturbance estimate from affecting the control input. A comparison of conventional model reference adaptive control, ``direct'' disturbance observer-based adaptive control without integral action, and the proposed method shows the integral action improves the accuracy of the disturbance estimate and thus provides better disturbance rejection.

\clearpage
\bibliographystyle{elsarticle-num}
\bibliography{YingchunBib}

\begin{thebibliography}{10}
\expandafter\ifx\csname url\endcsname\relax
  \def\url#1{\texttt{#1}}\fi
\expandafter\ifx\csname urlprefix\endcsname\relax\def\urlprefix{URL }\fi
\expandafter\ifx\csname href\endcsname\relax
  \def\href#1#2{#2} \def\path#1{#1}\fi

\bibitem{Krstic.AdaptiveControl}
M.~Krstic, I.~Kanellakopoulos, P.~V. Kokotovic, Nonlinear and Adaptive Control
  Design, John Wiley and Sons, New York, NY, 1995.

\bibitem{Hovakimyan.L1AdaptiveControl}
N.~Hovakimyan, C.~Cao, E.~Kharisov, E.~Xargay, I.~M. Gregory, L1 adaptive
  control for safety-critical systems, IEEE Control Systems Magazine 31~(5)
  (2011) 54--104.

\bibitem{Hovakimyan&Cao.L1AdaptiveControl}
N.~Hovakimyan, C.~Cao, L1 Adaptive Control Theory: Guaranteed Robustness with
  Fast Adaptation, SIAM, 2010.

\bibitem{Narendra&Annaswamy.StableAdaptiveSystems}
K.~S. Narendra, A.~M. Annaswamy, Stable Adaptive Systems, Courier Corporation,
  2012.

\bibitem{Lavretsky&Wise.RobustAdaptiveControl}
E.~Lavretsky, K.~Wise, Robust and Adaptive Control: With Aerospace
  Applications, Springer, 2023.

\bibitem{Pagilla.ASMEDynSys2004}
P.~R. Pagilla, Y.~Zhu, Adaptive control of mechanical systems with time-varying
  parameters and disturbances, Journal of Dynamic Systems, Measurement, and
  Control 126 (2004) 520--530.

\bibitem{Chen.TAC2021}
K.~Chen, A.~Astolfi, Adaptive control for systems with time-varying parameters,
  IEEE Transactions on Automatic Control 66~(5) (2021) 1986--2001.

\bibitem{Marino.Automatica2003}
R.~Marino, P.~Tomei, Adaptive control of linear time-varying systems,
  Automatica 39 (2003) 651--659.

\bibitem{Li&Chen.DOBC}
S.~Li, J.~Yang, W.-H. Chen, X.~Chen, Disturbance Observer-based Control:
  Methods and Application, CRC Press, 2014.

\bibitem{Back2008.DOBC}
J.~Back, H.~Shim, Adding robustness to nominal output-feedback controllers for
  uncertain nonlinear systems: A nonlinear version of disturbance observer,
  Automatica 44 (2008) 2028--2537.

\bibitem{Khalil&Praly.HGOBC}
H.~K. Khalil, L.~Praly, High-gain observers in nonlinear feedback control,
  International Journal of Robust and Nonlinear Control 24 (2014) 993--1015.

\bibitem{Chen.DOSMC}
M.~Chen, S.-D. Chen, Q.-X. Wu, Sliding mode disturbance observer-based adaptive
  control for uncertain {MIMO} nonlinear systems with dead-zone, International
  Journal of Adaptive Control and Signal Processing 31 (2017) 1003--1018.

\bibitem{Tao.DOSMC}
L.~Tao, Q.~Chen, Y.~Nan, Disturbance-observer based adaptive control for
  second-order nonlinear systems using chattering-free reaching law,
  International Journal of Control, Automation and Systems 17 (2019) 356--369.

\bibitem{Ginoya.DOSMC}
D.~Ginoya, P.~Shendge, S.~Phadke, Disturbance observer based sliding mode
  control of nonlinear mismatched uncertain systems, Communications in
  Nonlinear Science and Numerical Simulation 26 (2015) 98--107.

\bibitem{Chen.JDSMC20}
K.-Y. Chen, A real-time parameter estimation method by using disturbance
  observer-based control, Journal of Dynamic Systems, Measurement, and Control
  142 (2020).

\bibitem{Chen.AJC20}
K.-Y. Chen, A new model reference adaptive control with the disturbance
  observer-based adaptation law for the nonlinear servomechanisms: {SISO} and
  {MIMO} systems, Asian Journal of Control 23 (2021) 2597--2616.

\bibitem{Chen2004.DOBC}
W.-H. Chen, Disturbance observer based control for nonlinear systems, IEEE/ASME
  Transactions on Mechatronics 9~(4) (2004) 706--710.

\bibitem{Kim2010.DO}
K.-S. Kim, K.-H. Rew, S.~Kim, Disturbance observer for estimating higher order
  disturbances in time series expansion, IEEE Transactions on Automatic Control
  55~(5) (2010).

\bibitem{Khalil.HGO}
H.~K. Khalil, Extended high-gain observers as disturbance estimators, SICE
  Journal of Control, Measurement, and System Integration 10~(3) (2017)
  125--134.

\bibitem{Chen&Woolsey.TCST23}
Y.-C. Chen, C.~A. Woolsey, A structure-inspired disturbance observer for
  finite-dimensional mechanical systems, IEEE Transactions on Control Systems
  Technology 32~(2) (2024) 440--455.

\bibitem{ChenAndWoolsey.ACC22}
Y.-C. Chen, C.~A. Woolsey, Stability under state estimate feedback using an
  observer characterized by uniform semi-global practical asymptotic stability,
  in: 2022 American Control Conference (ACC), 2022.

\bibitem{Pomet.ProjectionAlgorithm}
J.-B. Pomet, L.~Praly., Adaptive nonlinear regulation: Estimation from the
  lyapunov equation, IEEE Transactions on Automatic Control 37~(6) (1992)
  729--740.

\bibitem{Khalil.NonlinearSys}
H.~K. Khalil, Nonlinear Systems: Third Edition, Prentice Hall, 2002.

\end{thebibliography}

\end{document}